\documentclass[runningheads,envcountsame]{llncs}
\usepackage[T1]{fontenc}
\usepackage{amsmath,amssymb}
\usepackage{mathtools}
\usepackage{enumerate}
\usepackage{hyperref}
\usepackage[noabbrev,capitalize]{cleveref}
\Crefname{lemma}{Lemma}{Lemmata}
\usepackage{ebproof}
\usepackage{xspace}

\newcommand{\qedhere}{\qed}

\renewcommand{\emptyset}{\varnothing}

\newcommand{\m}[1]{\mathsf{#1}}
\newcommand{\mi}[1]{\mathit{#1}}
\newcommand{\seq}[2][n]{{#2_1},\ldots,{#2_{#1}}}
\newcommand{\aseq}[2][n]{{#2_1}\cdots{#2_{#1}}}

\newcommand{\bN}{\mathbb{N}}
\newcommand{\cF}{\mathcal{F}}

\newcommand{\cM}{\mathcal{M}}
\newcommand{\cR}{\mathcal{R}}

\newcommand{\cV}{\mathcal{V}}

\newcommand{\Var}{\mathcal{V}\mathsf{ar}}
\newcommand{\HV}{\mathcal{HV}}

\renewcommand{\Sigma}{\cF_0}
\newcommand{\SN}{{\mathsf{SN}}}
\newcommand{\mul}{\mathsf{mul}}
\newcommand{\mpo}{\mathsf{mpo}}
\newcommand{\ampo}{\mathsf{ampo}}
\newcommand{\app}{\mathop{\circ}}
\newcommand{\APP}{@}
\newcommand{\reify}[1]{\hat{#1}}
\newcommand{\gtB}{\mathrel{\triangleright}}
\newcommand{\gtE}{\mathrel{\triangleright}_\eta}
\newcommand{\gtM}{>_\ampo}

\newcommand{\AMPO}[1]{\textbf{\labelcref{ampo#1}}}
\newcommand{\MPO}[1]{\labelcref{mpo#1}}
\newcommand{\aaR}{\mathsf{aa}_\cR}

\begin{document}
\title{An Applicative Multiset Path Order \\ (Extended Version)}
\author{Nao~Hirokawa\orcidID{0000-0002-8499-0501}
\and Teppei~Saito\orcidID{0009-0001-9786-0044}
\and Teppei~Tanaka
\and Wataru~Yachi}
\authorrunning{N. Hirokawa et al.}
\institute{JAIST, Nomi, Japan \email{\{hirokawa,saito\}@jaist.ac.jp}}
\maketitle
\begin{abstract}
We present a variant of the multiset path order for untyped applicative
term rewriting.  Compared to existing work, our variant incorporates
two distinctive features, dubbed arity assignment and reification,
to overcome difficulties in handling partial and variable
application.
\keywords{Term rewriting \and Termination \and Path order}
\end{abstract}

\section{Introduction}

Dershowitz's multiset path order~\cite{D82} (originally called recursive path order)
is a popular reduction order for proving termination of first-order term rewriting.
This paper presents an extension of the order to 
untyped \emph{applicative} term rewriting~\cite{KKSdeF98}
which allows us to model lambda-free higher-order functions with first-order terms.
For example, the higher-order function $\m{map}$ is represented by:
\begin{align*}
\m{map}~f~\m{nil} &\to \m{nil}
&
\m{map}~f~(\m{cons}~x~\mi{xs}) &\to \m{cons}~(f~x)~(\m{map}~f~\mi{xs})
\end{align*}
This expressiveness comes at the price of difficulty in proving termination.
As a matter of the fact, the multiset path order
(or more generally, any reduction order with the subterm property)
cannot even show the termination of the one-rule rewrite system
$\m{mult}~(\m{s}~x)~y \to \m{add}~(\m{mult}~x~y)~y$.

In fact, variants of the recursive path order~\cite{B21,BWW17} for applicative terms have 
already been investigated 
in the context of higher-order theorem proving.
Our work follows this line of research.
For instance, the Knuth--Bendix completion~\cite{KB70} 
with our order
can transform the equations
\begin{align*}
\m{append}~\m{nil}~\mi{ys} &\approx \mi{ys}
&
\m{append}~(\m{cons}~x~\mi{xs})~\mi{ys}
&\approx \m{cons}~x~(\m{append}~\mi{xs}~\mi{ys})
&
\m{id}~x &\approx x
\\
\m{map}~f~\m{nil} &\approx \m{nil}
&
\m{map}~f~(\m{cons}~x~\m{nil})
&\approx \m{cons}~(f~x)~\m{nil}
\\
\m{map}~\m{id} &\approx \m{id}
&
\m{map}~f~(\m{append}~\mi{xs}~\mi{ys})
&\approx \makebox[0mm][l]{$\m{append}~(\m{map}~f~\mi{xs})~(\m{map}~f~\mi{ys})$}
\end{align*}
into a complete (terminating and confluent) rewrite system
(see \cref{ex:map}).

Technically, our variant has two distinctive features: \emph{arity assignment}
for constants and \emph{reification} of function application.  The former is
used for simulating the behavior of the original multiset path order,
while the latter, inspired by uncurrying~\cite{GTS05,HMZ13,KKSdeF98},
is used for handling partial
application and variable application like $\m{map}~f$ and $f~x$.
Our order forms a reduction order capable of
showing termination of typical higher-order functions such as $\m{map}$ and
$\m{filter}$.
Moreover, it satisfies the \emph{gracefulness} property which 
states a form of correspondence between Dershowitz's multiset path order and ours
and thereby provides a minimum guarantee in termination proving.

The remaining part of the paper is organized as follows.
\cref{sec:prelim} serves as preliminaries.  We build our order in three
steps: in \cref{sec:base}  we define a base order which is shown transitive
and well-founded; in \cref{sec:mono} 
it is lifted to a well-founded order that is closed under contexts;
and finally in \cref{sec:stability} it is further lifted to a reduction
order by a folklore technique. The resulting order is
evaluated in \cref{sec:evaluation}.  We conclude by discussing related
work in \cref{sec:conclusion}.

This paper is an extended version of our conference paper~\cite{IJCAR} with omitted proofs in the appendix.

\section{Preliminaries}
\label{sec:prelim}

Throughout the paper, we assume familiarity with term
rewriting~\cite{BN98,TeReSe}.  Below, we briefly recall
Dershowitz's multiset path order and applicative terms.

Let $\cV$ be an infinite set of variables, and $\cF$ a set of function
symbols with $\cV \cap \cF = \varnothing$.  Terms are inductively
defined as follows: all variables are terms; and if $f$ is an $n$-ary
function symbol in $\cF$ and $\seq{t}$ are terms then $f(\seq{t})$ is a
term.  We simply write $f$ for $f()$.
\emph{Ground terms} are terms that do not contain variables.
A \emph{substitution} $\sigma$ is a mapping from variables to terms such
that $\sigma(x) \neq x$ for only finitely many variables $x$.  In
particular, we write $\{ x \mapsto t \}$ for the substitution that maps $x$
to $t$ and other variables $y$ to $y$ itself.  Given a term $t$ and a substitution
$\sigma$, we write $t\sigma$ for the term obtained by replacing every
variable $x$ occurring in $t$ with $\sigma(x)$.  A \emph{context} $C$ is a
term containing exactly one \emph{hole} $\square$, which is a
constant not in $\cF$. The term obtained by replacing $\square$
in $C$ by a term $t$ is denoted by $C[t]$.  
\emph{Ground contexts} are contexts that do not contain variables.
The set of variables occurring in a term $t$ is written as $\Var(t)$.  
A \emph{rewrite rule} $\ell \to r$
is a pair $(\ell, r)$ of terms such that $\ell$ is not a variable and
$\Var(\ell) \supseteq \Var(r)$.
A set $\cR$ of rewrite rules is called
a \emph{term rewrite system} (TRS).  We write $s \to_\cR t$ if 
$s = C[\ell\sigma]$ and $t = C[r\sigma]$ for some rewrite rule
$\ell \to r \in \cR$, context $C$, and substitution $\sigma$.

A TRS $\cR$ is \emph{terminating} if there is no infinite sequence
$t_0 \to_\cR t_1 \to_\cR \cdots$ with respect to $\to_\cR$.
Alternatively, it means that the relation $\to_\cR$ is well-founded.
A basic method for proving termination is \emph{reduction orders}, which
are well-founded orders on terms that is closed under substitutions and
contexts.  Here $>$ is said to be \emph{closed under substitutions} if 
$s \sigma > t\sigma$ whenever $s > t$ and $\sigma$ is a substitution.
Similarly, $>$ is said to be \emph{closed under (ground) contexts} 
if $C[s] > C[t]$ whenever $s > t$ and $C$ is a (ground) context,
respectively.
Reduction orders characterize termination of term rewrite systems.

\begin{proposition}
\label{prop:compatibility}
A TRS $\cR$ is terminating if and only if there exists a
reduction order $>$ with $\cR \subseteq {>}$.
\end{proposition}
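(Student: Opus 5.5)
We prove the two directions separately. The ``if'' direction is a direct consequence of the closure properties of a reduction order. The ``only if'' direction is obtained by taking the transitive closure of the rewrite relation itself as the witnessing order.

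For the ``if'' direction, assume $>$ is a reduction order with $\cR \subseteq {>}$. We first show ${\to_\cR} \subseteq {>}$. Let $s \to_\cR t$, so $s = C[\ell\sigma]$ and $t = C[r\sigma]$ for some rule $\ell \to r \in \cR$. From $\ell > r$, closure under substitutions gives $\ell\sigma > r\sigma$. Closure under contexts then gives $C[\ell\sigma] > C[r\sigma]$. Hence any infinite rewrite sequence $t_0 \to_\cR t_1 \to_\cR \cdots$ would be an infinite descending chain $t_0 > t_1 > \cdots$. This contradicts well-foundedness of $>$, so $\cR$ is terminating.

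For the ``only if'' direction, assume $\cR$ is terminating and take ${>} := {\to_\cR^+}$. The steps are as follows.
\begin{enumerate}[(i)]
\item It is transitive by construction. It is irreflexive because $t \to_\cR^+ t$ would yield the infinite cycle $t \to_\cR^+ t \to_\cR^+ \cdots$. So it is a strict order.
\item It is well-founded. An infinite descending chain in $\to_\cR^+$ unfolds into an infinite $\to_\cR$-sequence, contradicting termination.
\item It contains $\cR$. Each rule $\ell \to r$ is a one-step rewrite, using the empty context $\square$ and the identity substitution.
\item It is closed under substitutions. It suffices to check a single step and then induct on the length of the sequence. If $s = C[\ell\sigma] \to_\cR C[r\sigma] = t$, then $s\tau = C\tau[\ell(\sigma\tau)]$ and $t\tau = C\tau[r(\sigma\tau)]$. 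Here $C\tau$ is again a context, since the hole is a constant not affected by $\tau$, and $\sigma\tau$ is a substitution.
\item It is closed under contexts. Again it suffices to check a single step: $D[C[\ell\sigma]] = (D[C])[\ell\sigma]$, and $D[C]$ is a context.
\end{enumerate}

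The only mildly delicate point is the bookkeeping in steps (iv) and (v). One must confirm that composing contexts, and applying a substitution to a context, yields a context with exactly one hole. One must also confirm that the composition $\sigma\tau$ is a substitution in the paper's sense, i.e., has finite support. Both facts are routine. No obstacle beyond this is expected.
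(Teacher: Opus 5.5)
Your proof is correct and is the standard textbook argument. The paper gives no proof of its own: it states the proposition as known background (cf. Baader--Nipkow). In one direction you lift each rule to $\to_\cR$ using closure under substitutions and contexts, and in the other you take $\to_\cR^+$ itself as the witnessing reduction order.
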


A key ingredient of the multiset path order is multiset extension.
Let $>$ be a binary relation on a set $A$.
We write $\cM(A)$ for the set of all finite multisets over $A$.
The \emph{multiset extension} $>^\mul$ of $>$
is defined on $\cM(A)$ as follows: $M >^\mul N$ if there exist multisets 
$X, Y \in \cM(A)$ such that $N = (M - X) \uplus Y$,
$\varnothing \neq X \subseteq M$, and every $b \in Y$ admits an element 
$a \in X$ with $a > b$.  It is known that $>^\mul$ is a well-founded order
if $>$ is a well-founded order~\cite{DM79}.

Now we recall the multiset path order (MPO for short)~\cite{D82}.
A \emph{precedence} $\succsim$ on a signature $\cF$ is a quasi-order on
$\cF$. We denote the strict part $\succsim \setminus \precsim$
of $\succsim$ by $\succ$, and the equivalence part $\succsim \cap \precsim$
by $\sim$.  We call $\succsim$ well-founded if $\succ$ is.
The reflexive closure and the reflexive and transitive
closure of a relation $\leadsto$ are denoted by $\leadsto^=$ and
$\leadsto^*$, respectively.

\begin{definition}
\label{def:mpo}
Let $\succsim$ be a precedence.  The \emph{multiset path order}
$>_\mpo$ is inductively defined on terms as follows: $s >_\mpo t$ if 
$s = f(\seq[m]{s})$ and one of the following conditions holds.
\begin{enumerate}
\item
\label{mpo1}
$s_i >_\mpo^= t$ for some $i$.
\item
\label{mpo2}
$t = g(\seq{t})$, $f \succ g$, and
$s >_\mpo t_j$ for all $j$.
\item
\label{mpo3}
$t = g(\seq{t})$, $f \succsim g$, and
$\{\seq[m]{s}\} >_\mpo^\mul \{\seq{t}\}$.
\end{enumerate}
\end{definition}

\begin{theorem}[\textnormal{\cite{D82}}]
\label{thm:mpo}
The relation $>_\mpo$ is a reduction order if the precedence $\succsim$ is well-founded.
\end{theorem}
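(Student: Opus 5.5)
We have to show that $>_\mpo$ is a strict order (irreflexive and transitive), that it is well-founded, and that it is closed under contexts and substitutions. All properties except well-foundedness go by induction on the sizes of the terms involved. Well-foundedness is the main obstacle; for it we use Kruskal's tree theorem, or alternatively a computability-style minimal bad sequence argument.

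\emph{Basic facts.} First we prove the subterm property $f(\seq[m]{s}) >_\mpo s_i$, which is immediate from \MPO{1}. Next we prove a "subterm-step" lemma: if $s >_\mpo t$, then $s >_\mpo u$ for every proper subterm $u$ of $t$. This goes by induction on $t$. If $t = g(\seq{t})$ was obtained by \MPO{2}, then $s >_\mpo t_j$ holds directly for all $j$. If it was obtained by \MPO{3}, each $t_j$ is dominated by some $s_i$ with $s_i >_\mpo^= t_j$, so \MPO{1} gives $s >_\mpo t_j$. If it was obtained by \MPO{1}, we use induction.

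\emph{Transitivity.} We show that $s >_\mpo t >_\mpo u$ implies $s >_\mpo u$, by induction on $|s|+|t|+|u|$, with a case analysis over the pair of rules used. When $s >_\mpo t$ is by \MPO{1}, we apply induction to $s_i$. When $t >_\mpo u$ is by \MPO{1}, we have $t_j >_\mpo^= u$ and $s >_\mpo t_j$ by the lemma, and conclude by induction. In the remaining cases $u = h(\ldots)$, and we compare precedences:
\begin{itemize}
\item if the combined precedence step is strict, we use \MPO{2}, obtaining $s >_\mpo u_k$ by induction via $t >_\mpo u_k$;
\item if both steps are equivalences, we use \MPO{3}, relying on transitivity of the multiset extension on the argument multisets, with induction supplying transitivity on the smaller terms.
\end{itemize}
Irreflexivity is proved by induction on $s$. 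Rule \MPO{1} would need $s_i >_\mpo^= s$, hence $s_i >_\mpo s_i$ via transitivity and the subterm property, contradicting the induction hypothesis. Rule \MPO{3} would need $M >^\mul M$ for the argument multiset $M$, which is impossible because the multiset extension of an order on the (irreflexive) subterms is irreflexive.

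\emph{Closure under contexts and substitutions.} For contexts it suffices to treat one-hole contexts of depth one: from $s >_\mpo t$ we get
$f(\ldots,s,\ldots) >_\mpo f(\ldots,t,\ldots)$
by \MPO{3}, since the argument multisets differ by replacing $s$ with the smaller $t$. For substitutions we induct on the derivation of $s >_\mpo t$. The case $t$ a variable is handled by \MPO{1}, because $t$ must occur in $s$ as a proper subterm, which is exactly what \MPO{1} unfolds to. The other cases carry over directly, since the multiset extension is preserved by applying $\sigma$ elementwise.

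\emph{Well-foundedness.} Assume first that $\cF$ is finite. Then $>_\mpo$ contains the homeomorphic embedding modulo $\sim$ (every embedding step is an instance of \MPO{1} or \MPO{3}, using context closure). By Kruskal's theorem with the well-quasi-order $\succsim$ restricted to a finite signature, any infinite descending chain would contain $t_i \trianglelefteq t_j$ with $i<j$, giving $t_j \geq t_i >_\mpo t_j$, which contradicts irreflexivity. For an arbitrary well-founded precedence we avoid Kruskal's theorem and use a minimal-bad-sequence (Nash-Williams) style argument instead. Call a term accessible if there is no infinite descent from it. We show by induction on the well-founded order $(\succ, >_\mpo^\mul \text{ on accessible arguments})$ that $f(\seq[m]{s})$ is accessible whenever all $s_i$ are accessible: each successor is handled by one rule, and \MPO{2} appeals to an inner induction on term size for its arguments. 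This is the delicate step, since it requires the lexicographic pairing of the precedence with the multiset extension to be well-founded on accessible terms. Accessibility of all terms then follows by structural induction.
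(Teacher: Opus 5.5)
Your proposal is correct in substance, but it does not follow the route the paper points to. The paper gives no proof of this theorem; it cites Dershowitz~\cite{D82} and remarks that the original well-foundedness proof goes through Kruskal's tree theorem, and hence through the subterm property. Your decisive argument is instead Buchholz's method, which the paper adopts for its own order. The induction on $(f,\{\seq[m]{s}\})$, ordered lexicographically by the precedence and by $>_\mpo^\mul$ on multisets of terminating terms, with an inner induction on the size of $t$ for clause~2, is exactly the MPO instance of \cref{lem:sn-key,lem:well-foundedness}. There the measure $\phi(s,t)$ folds your inner size induction into its last two components. This is an accessibility induction, not a minimal-bad-sequence argument as you call it. Its advantage over Kruskal is that it works for any well-founded precedence, including infinite signatures where $\succsim$ need not be a well-quasi-order, and it never uses the subterm property. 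That is precisely why the paper needs it for $\gtB$, where $\m{s}~x \gtB \m{s}$ fails. Your transitivity argument likewise parallels \cref{lem:transitivity}.

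Since the accessibility argument already covers finite signatures, your Kruskal paragraph is superfluous, and as written it is wrong. The relation $>_\mpo^=$ does not contain homeomorphic embedding modulo $\sim$. Take unary $f \sim g$ with $f \neq g$: the term $f(x)$ embeds $g(x)$ modulo $\sim$, yet $f(x) \not>_\mpo g(x)$. Clause~1 would need $x >_\mpo^= g(x)$, clause~2 would need $f \succ g$, and clause~3 would need $\{x\} >_\mpo^\mul \{x\}$. So the step $t_j \geq t_i >_\mpo t_j$ has no valid $\geq$. A repair would compare labels by syntactic equality. Because $\cV$ is infinite, it would also need $\Var(t) \subseteq \Var(s)$ whenever $s >_\mpo t$, to confine a descending chain to finitely many labels. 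The simplest fix is to drop the paragraph.

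Two smaller points. First, the outer order must be the lexicographic product with $\succsim$ in the first component: a decrease when $f \succ g$, or when $f \succsim g$ and the argument multiset decreases. It cannot be $\succ$ followed by equality of heads, because clause~3 allows $f \sim g$ with $f \neq g$. Second, in your subterm-step lemma the clause~1 case $s_i >_\mpo t$ leaves $t$ unchanged, so the induction must be on $s$ (or on the derivation), not on $t$.
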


It is worth noting that the MPO has the \emph{subterm property},
i.e.,  $C[t] >_\mpo^= t$ for all terms $t$ and contexts $C$.

\begin{example}
\label{ex:add1}
Consider the TRS $\cR$ for addition and multiplication:
\begin{align*}
\m{add}(\m{0},y) &\to y
&
\m{mult}(\m{0},y) &\to \m{0}
\\
\m{add}(\m{s}(x),y) &\to \m{s}(\m{add}(x,y))
&
\m{mult}(\m{s}(x),y) &\to \m{add}(\m{mult}(x,y), y)
\end{align*}
By taking a precedence $\succsim$ with $\m{mult} \succ \m{add} \succ
\m{s}$ the relation $\ell >_\mpo r$ holds for all $\ell \to r \in \cR$.
For instance, $\m{add}(\m{s}(x),y) >_\mpo \m{s}(\m{add}(x,y))$ is verified
by the derivation on the left of \cref{fig:add}.  Hence, the termination of $\cR$
is concluded by \cref{prop:compatibility} together with \cref{thm:mpo}.
\end{example}

\begin{figure}[t]
\centering
\begin{prooftree}[center=false]
 \hypo{\m{add} \succ \m{s}}
    \hypo{x >_\mpo^= x}
   \infer{1}[\MPO{1}]{\m{s}(x) >_\mpo x}
  \infer{1}[$\mul$]{\{\m{s}(x),y\} >_\mpo \{x,y\}}
 \infer{1}[\MPO{3}]{\m{add}(\m{s}(x),y) >_\mpo \m{add}(x,y)}
\infer{2}[\MPO{2}]{\m{add}(\m{s}(x),y) >_\mpo \m{s}(\m{add}(x,y))}
\end{prooftree}
\hfil
\begin{prooftree}[center=false]
 \hypo{\m{add} \succ \m{s}}
    \hypo{x \gtB^= x}
   \infer{1}[\AMPO{1}]{\vphantom{|}\m{s}~x \gtB x}
  \infer{1}[$\mul$]{\{\m{s}~x,y\} \gtB \{x,y\}}
 \infer{1}[\AMPO{3}]{\m{add}~(\m{s}~x)~y \gtB \m{add}~x~y}
\infer{2}[\AMPO{2}]{\m{add}~(\m{s}~x)~y \gtB \m{s}~(\m{add}~x~y)}
\end{prooftree}
\caption{Derivations of MPO (left) and AMPO (right)}
\label{fig:add}
\end{figure}

An \emph{applicative signature} consists of constants and a single binary
symbol $\app$ called application.  Terms over an applicative signature are
called \emph{applicative terms}.  Using juxtaposition as in combinatory logic,
$\app(\ldots\app(t_1, t_2)\ldots,t_n)$ is abbreviated  to $t_1\,t_2 \cdots t_n$.  For
example, $\app(\app(\m{cons},\app(f,x)),\app(\app(\m{map},f),\mi{xs}))$ is
written as $\m{cons}~(f~x)~(\m{map}~f~\mi{xs})$.
\emph{An applicative TRS} (ATRS) is a TRS over an applicative signature,
{whereas TRSs/terms over ordinary signatures are sometimes called 
\emph{functional} TRSs/terms.}
The TRS for $\m{map}$ in the introduction is an ATRS over 
$\{ \m{nil}, \m{cons}, \m{map}, \app \}$.  Currying~\cite{KKSdeF98}
recursively transforms a term $t$ over a (possibly non-applicative)
signature $\cF$ into an applicative term $t^\circ$ as follows: $t^\circ = t$ if
$t \in \cV$, and $f\,\aseq{t^\circ}$ if $t = f(\seq{t})$. 
(Strictly speaking, the signature after currying is obtained 
by regarding each $f \in \cF$ as a constant and adding a fresh binary symbol $\circ$.)

\begin{example}
\label{ex:add2}
The curried version of the addition TRS (\cref{ex:add1}) is as follows:
\begin{align*}
\m{add}~\m{0}~y &\to y
&
\m{mult}~\m{0}~y &\to \m{0}
\\
\m{add}~(\m{s}~x)~y &\to \m{s}~(\m{add}~x~y)
&
\m{mult}~(\m{s}~x)~y &\to \m{add}~(\m{mult}~x~y)~y
\end{align*}
As  mentioned in the introduction,
the sufficient condition $\cR \subseteq {>}$ for termination cannot be
satisfied by any reduction order $>$ having the subterm property, including
MPO. If
so, for the last rule $\ell \to r$ and $\sigma = \{ y \mapsto \m{s}~x \}$
we have $\ell\sigma > r\sigma$, which leads to a contradiction:
\[
\ell\sigma
>
r\sigma
=
\underline{\m{add}}~(\m{mult}~x~(\m{s}~x))~(\m{s}~x)
>
\m{mult}~\underline{x}~(\m{s}~x)~(\m{s}~x)
>
\m{mult}~(\m{s}~x)~(\m{s}~x)
=
\ell\sigma
\]
Here, the decreases with respect to $>$ by chopping the underlined symbols
are guaranteed by the subterm property and the closure under contexts of $>$.
\end{example}

\section{Base Order}
\label{sec:base}

Toward the goal of devising a reduction order for applicative terms,
we begin with the definition of a base order which forms a well-founded order.

From now on, we assume that $\cF$ is an applicative signature,
and that it contains a special fresh constant $\APP$.
We write $\Sigma$ for the set of all constants in $\cF$.  Observe that
an arbitrary applicative term $t$ can be written in the form $f\,\aseq{t}$
with $f \in \Sigma \cup \cV$.  When $n > 0$, the term $t$ is said to be
\emph{reifiable} and the term $\APP\,(f\,\aseq[n-1]{t})\,t_n$ is denoted by
$\reify{t}$.  The term $\reify{t}$ is called the \emph{reification} of $t$.
\emph{Arity assignments} are functions from $\Sigma$ to $\bN$ with
$\alpha(\APP) = 2$.  Let $\alpha$ be an arity assignment.
A term $t = f\,\aseq{t}$ is
\emph{full} if $f \in \Sigma$ and $\alpha(f) = n$, and \emph{partial} if
$f \in \Sigma$ and $\alpha(f) > n$.
Note that a term can be neither full nor partial.

\begin{example}
\label{ex:full-partial-reify}
Let $\alpha$ be the arity assignment with
$\alpha(\APP) = \alpha(\m{map}) = \alpha(\m{cons}) = 2$,
$\alpha(\m{id}) = 1$, and $\alpha(\m{nil}) = 0$.
The term $\m{map}~f~\m{nil}$ is full, $\m{map}~\m{id}$ is 
partial, and both $\m{map}~f~x~y$ and $f\,x$ are neither full nor
partial.
These four terms are reifiable, while $x$ and $\m{map}$ are not.
The reification of $f\,x$ is $\APP\,f\,x$.
\end{example}

\begin{definition}
\label{def:ampo}
Let $\alpha$ be an arity assignment and $\succsim$ a precedence on
$\Sigma$.  The relation $\gtB$ is inductively defined on terms over
$\Sigma$ as follows: $s \gtB t$ if $s = f\, \aseq[m]{s}$ and one of the
following conditions holds.
\begin{enumerate}[~\bfseries1.]
\item
\label{ampo1}
$s_i \gtB^= t$ for some $i$.
\item
\label{ampo2}
$s$ and $t = g\,\aseq{t}$ are full, $f \succ g$, and
$s \gtB t_j$ for all $j$.
\item
\label{ampo3}
$s$ and $t = g\,\aseq{t}$ are full, 
$f \succsim g$, and $\{\seq[m]{s}\} \gtB^\mul \{\seq{t}\}$.
\item
\label{ampo4}
$s$ is full, $t$ is reifiable and not full, and $s \gtB \reify{t}$.
\item
\label{ampo5}
$s$ is reifiable and not full, and $\reify{s} \gtB t$.
\end{enumerate}
\end{definition}

The first three cases~\AMPO{1}, \AMPO{2}, and \AMPO{3} are from
the original multiset path order.

\begin{example}[continued from \cref{ex:add2}]
\label{ex:add3}
Let $\alpha$ be the arity assignment such that $\alpha(\m{add}) = 2$,
$\alpha(\m{s}) = 1$, and $\alpha(\m{0}) = 0$, and let $\succsim$ be the
same precedence as before.  Then $\ell \gtB r$ holds for all
$\ell \to r \in \cR$.  For example,
$\m{add}~(\m{s}~x)~y \gtB \m{s}~(\m{add}~x~y)$ is verified by the second
derivation in \cref{fig:add}.  Remark that the subterm property does not
hold for $\gtB$.  In fact, $\m{s}~x \gtB \m{s}$ does not hold.
\end{example}

The last two cases~\AMPO{4} and \AMPO{5} in \cref{def:ampo} are used
for resolving non-fullness.

\begin{example}
\label{ex:ampo}
Let $\alpha$ be the arity assignment of \cref{ex:full-partial-reify},
and let $\succsim$ be the precedence with $\APP \sim \m{map} \succ \m{cons} \succ \m{nil}$.
For the terms $s = \m{map}~f~(\m{cons}~x~\mi{xs})$ and 
$s' = \m{cons}~x~\mi{xs}$ we have the following derivation:
\[
\begin{prooftree}
 \hypo{\makebox[2em][l]{$\m{map} \succ \m{cons}$}}
   \hypo{\m{map} \succsim \APP}
     \hypo{x \gtB^= x}
    \infer{1}[\AMPO{1}]{s' \gtB x}
   \infer{1}[$\mul$]{\{f, s'\} \gtB^\mul \{f,x\}}
  \infer{2}[\AMPO{3}]{s \gtB \APP~f~x}
 \infer{1}[\AMPO{4}]{s \gtB f~x}
  \hypo{\m{map} \succsim \m{map}}
    \hypo{\mi{xs} \gtB^= \mi{xs}}
   \infer{1}[\AMPO{1}]{s' \gtB \mi{xs}}
  \infer{1}[$\mul$]{\{f,s'\} \gtB^\mul \{f,\mi{xs}\}}
 \infer{2}[\AMPO{3}]{s \gtB \m{map}~f~\mi{xs}}
\infer[separation=0em]{3}[\AMPO{3}]{s \gtB \m{cons}~(f~x)~(\m{map}~f~\mi{xs})}
\end{prooftree}
\]
The next two derivations illustrate case~\AMPO{5}:
\begin{align*}
&
\begin{prooftree}[center=false]
  \hypo{\APP \succsim \APP}
   \hypo{\{ \m{map}, \m{id}~f \} \gtB^\mul \{ \m{map}, f \} }
 \infer{2}[\AMPO{3}]{\APP~\m{map}~(\m{id}~f) \gtB \APP~\m{map}~f}
 \infer{1}[\AMPO{4}]{\APP~\m{map}~(\m{id}~f) \gtB \m{map}~f}
\infer{1}[\AMPO{5}]{\m{map}~(\m{id}~f) \gtB \m{map}~f}
\end{prooftree}
&&
\begin{prooftree}[center=false]
    \hypo{\APP \succ \m{id}}
      \hypo{x \gtB^= x}
    \infer{1}[\AMPO{1}]{\APP \, x \, y \gtB x}
  \infer{2}[\AMPO{2}]{\APP \, x \, y \gtB \m{id} \, x}
\infer{1}[\AMPO{5}]{x \, y \gtB \m{id} \, x}
\end{prooftree}
\end{align*}
While the former shows usefulness of reification in handling partial
application, the latter pathological derivation in fact suggests that $\gtB$
is not closed under substitutions, cf.~\cref{ex:non-stable}.
\end{example}

We prove that $\gtB$ is a well-founded order.
To this end we introduce two measures for induction.
Let $\alpha$ be an arity assignment and $\succsim$ a precedence.
Given a term $t = f\,\aseq{t}$ with $f \in \Sigma \cup \cV$,
we define $\|t\|_1$ and $\|t\|_2$ as follows:
\begin{align*}
\|t\|_1 &=
\begin{cases}
\|t_1\|_1 + \|t_2\|_1
& \text{if $f = \APP$ and $n = 2$}
\\
\|t_1\|_1 + \cdots + \|t_n\|_1 + 1
& \text{otherwise}
\end{cases}
\\
\|t\|_2 &= 
\begin{cases}
\|t_1\|_2 + \|t_2\|_2 
& \text{if $f = \APP$ and $n = 2$}
\\
\|t_1\|_2 + \cdots + \|t_n\|_2 + n
& \text{otherwise}
\end{cases}
\end{align*}
The first function counts the number of symbols except $\APP$ occurring in
the form $\APP \, t_1 \, t_2$, while the second function
computes the sum of the numbers of arguments not headed by $\APP$.

\begin{example}
For $t = \m{map}~f~\mi{xs}$ and 
$\reify{t} = \APP~(\m{map}~f)~\mi{xs}$ we have:
\begin{align*}
\|t\|_1 &= \|f\|_1 + \|\mi{xs}\|_1 + 1 
= 3
&
\|\reify{t}\|_1 &= \|\m{map}~f\|_1 + \|\mi{xs}\|_1 
= 3
\\
\|t\|_2 &= \|f\|_2 + \|\mi{xs}\|_2 + 2 
= 2
&
\|\reify{t}\|_2 &= \|\m{map}~f\|_2 + \|\mi{xs}\|_2 + 0 
= 1
\end{align*}
\end{example}

\begin{lemma}
\label{lem:measures}
Let $t = f\,\aseq{t}$ with $f \in \Sigma \cup \cV$.
\begin{enumerate}[(1)]
\item
$\|t\|_1 > 0$.
\item
$\|t\|_1 > \|t_i\|_1$ for all $1 \leqslant i \leqslant n$.
\item 
$\|t\|_1 \geqslant \|\reify{t}\|_1$ and $\|t\|_2 > \|\reify{t}\|_2$ if 
$t$ is reifiable but not full.
\end{enumerate}
\end{lemma}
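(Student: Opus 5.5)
The three claims are proved in order: (1) by structural induction on $t$, and (2) and (3) by unfolding the definitions of $\|\cdot\|_1$ and $\|\cdot\|_2$, using (1) where needed. Throughout, write $t = f\,\aseq{t}$ with $f \in \Sigma \cup \cV$. We split on whether the first clause of the definitions applies, i.e.\ whether $f = \APP$ and $n = 2$.

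For (1), we induct on the size of $t$. If $f = \APP$ and $n = 2$, then $\|t\|_1 = \|t_1\|_1 + \|t_2\|_1$, and both summands are positive by the induction hypothesis. Otherwise $\|t\|_1 = \|t_1\|_1 + \cdots + \|t_n\|_1 + 1 \geqslant 1$.

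For (2), take the same two cases. If $f = \APP$ and $n = 2$, then $\|t\|_1 - \|t_i\|_1 = \|t_{3-i}\|_1 > 0$ by (1). Otherwise $\|t\|_1 \geqslant \|t_i\|_1 + 1$, since all other summands are nonnegative.

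For (3), assume $t$ is reifiable and not full, so $n > 0$ and $\reify{t} = \APP\,t'\,t_n$ with $t' = f\,\aseq[n-1]{t}$. Since $\reify{t}$ has head $\APP$ with exactly two arguments, $\|\reify{t}\|_k = \|t'\|_k + \|t_n\|_k$ for $k = 1,2$. The key observation is that $t$ itself cannot fall under the first clause. Indeed, $f = \APP$ and $n = 2$ would make $t$ full, because $\alpha(\APP) = 2$, and $t$ is not full. This is the only place the non-fullness hypothesis is used, and it is the main point to get right.

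Hence $\|t\|_1 = \sum_{i\leqslant n}\|t_i\|_1 + 1$ and $\|t\|_2 = \sum_{i\leqslant n}\|t_i\|_2 + n$. Now compute $t'$, again splitting on whether $t'$ falls under the first clause, i.e.\ whether $f = \APP$ and $n-1 = 2$.
\begin{itemize}
\item If not, then $\|t'\|_1 = \sum_{i<n}\|t_i\|_1 + 1$ and $\|t'\|_2 = \sum_{i<n}\|t_i\|_2 + (n-1)$. This gives $\|\reify{t}\|_1 = \|t\|_1$ and $\|\reify{t}\|_2 = \|t\|_2 - 1$.
\item If so, that is $f = \APP$ and $n = 3$, then $\|t'\|_1 = \|t_1\|_1 + \|t_2\|_1$ and $\|t'\|_2 = \|t_1\|_2 + \|t_2\|_2$. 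This gives $\|\reify{t}\|_1 = \|t\|_1 - 1$ and $\|\reify{t}\|_2 = \|t\|_2 - 3$.
\end{itemize}
In both cases $\|t\|_1 \geqslant \|\reify{t}\|_1$ and $\|t\|_2 > \|\reify{t}\|_2$, as claimed.
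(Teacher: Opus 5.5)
Your proof is correct and follows the same route as the paper. Part (1) is by induction and (2) is a consequence of (1). For (3), you observe that non-fullness excludes $(f,n) = (\APP,2)$ and then compute both measures directly. Your extra case split on whether $f\,\aseq[n-1]{t}$ is itself of the form $\APP\,t_1\,t_2$ (i.e.\ $f = \APP$ and $n = 3$) is slightly more careful than the paper: there the paper's chain for $\|\cdot\|_2$ writes an equality with $\|\reify{t}\|_2$ that is really only an inequality.
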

\begin{proof}
Easy induction on $t$ proves the first claim, from which the second
claim follows.  To prove the third claim, consider a term 
$t = f\, \aseq{t}$
with $f \in \Sigma \cup \cV$ that is reifiable and not full.  From the
conditions $n > 0$ and $(f,n) \neq (\APP,2)$ follow.
Therefore, the desired inequalities
\begin{align*}
\|t\|_1
&= (\|t_1\|_1 + \cdots + \|t_{n-1}\|_1 + 1) + \|t_n\|_1
\geqslant \|f\,\aseq[n-1]{t}\|_1 + \|t_n\|_1
= \|\reify{t}\|_1
\\
\|t\|_2
&= \|t_1\|_2 + \cdots + \|t_n\|_2 + n
> \|t_1\|_2 + \cdots + \|t_{n-1}\|_2 + n-1 + \|t_n\|_2
= \|\reify{t}\|_2
\end{align*}
hold.
\qed
\end{proof}

We note that in the case of (3) $\|t\|_1 = \|\hat{t}\|_1$ does not always hold.
This can be exemplified by $t = \APP\, \APP \, \APP \,\APP$
for which $\|t\|_1 = 4 > 3 = \|\hat{t}\|_1$ holds.

Although \AMPO{4} and \AMPO{5} increase sizes of terms by reification, the
measures serve as decreasing measures for induction.

\begin{lemma}
\label{lem:transitivity}
The relation $\gtB$ is transitive.
\end{lemma}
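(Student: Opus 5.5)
We show that $s \gtB t$ and $t \gtB u$ imply $s \gtB u$ by well-founded induction on the triple $(s,t,u)$. The measure is the multiset $\{(\|s\|_1,\|s\|_2), (\|t\|_1,\|t\|_2), (\|u\|_1,\|u\|_2)\}$ under the multiset extension of the lexicographic order on $\bN\times\bN$. By \cref{lem:measures}, every step used below strictly decreases one component pair while leaving the others unchanged:
\begin{itemize}
\item passing from a term to one of its arguments decreases $\|\cdot\|_1$ by (2);
\item passing from a reifiable non-full term to its reification keeps $\|\cdot\|_1$ from increasing and strictly decreases $\|\cdot\|_2$ by (3).
\end{itemize}
The proof is a case analysis, first on the rule deriving $s \gtB t$, then on the rule deriving $t \gtB u$.

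\emph{Cases driven by $s$.} If $s\gtB t$ is by \AMPO{1}, then $s_i \gtB^= t \gtB u$. Either $s_i = t$, or the induction hypothesis on $(s_i,t,u)$ gives $s_i\gtB u$; in both cases $s \gtB u$ by \AMPO{1}. If it is by \AMPO{5}, then $\reify{s} \gtB t \gtB u$. The induction hypothesis on $(\reify{s},t,u)$ gives $\reify{s}\gtB u$, and \AMPO{5} gives $s\gtB u$. Hence we may assume that $s$ is full and $s\gtB t$ is by \AMPO{2}, \AMPO{3} or \AMPO{4}.

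\emph{Cases driven by $t$.} If $s \gtB t$ is by \AMPO{4}, then $s \gtB \reify{t}$. Since $t$ is reifiable and not full, $t\gtB u$ can only come from \AMPO{1} or \AMPO{5}. In the \AMPO{5} subcase we have $\reify t\gtB u$ and apply the induction hypothesis to $(s,\reify{t},u)$. In the \AMPO{1} subcase, $t_i\gtB^= u$ with $t_i$ an argument of $t$; this argument is also a subterm of $\reify{t}$ (it is either $t_n$ or an argument of $f\,t_1\cdots t_{n-1}$). Using $s\gtB\reify t$, I would first derive $s\gtB t_i$ by peeling off $\reify{t}$. For this I will prove, simultaneously or as a preceding claim inside the same induction, an auxiliary fact: if $s \gtB t'$ and $t'$ has argument $v$, then $s \gtB v$, or, for full $s$, $s\gtB t'$ with $t' = f\,\aseq{t}$ implies $s\gtB t_j$. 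This fact is exactly the kind of statement the same induction handles, via the rules \AMPO{1}–\AMPO{5}. Then $s\gtB t_i\gtB^= u$, and the induction hypothesis on $(s,t_i,u)$ closes the case.

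\emph{Remaining case.} Here $t$ is full and $s\gtB t$ is by \AMPO{2} or \AMPO{3}. If $t \gtB u$ is by \AMPO{1}, then $t_j \gtB^= u$. \AMPO{2} gives $s\gtB t_j$ directly. \AMPO{3} gives that $t_j$ is dominated by some $s_i$ with $s_i\gtB^= t_j$, so $s\gtB t_j$ holds by \AMPO{1} and the induction hypothesis. We then chain with $t_j\gtB^= u$. If $t\gtB u$ is by \AMPO{4}, we get $t\gtB\reify u$; the induction hypothesis on $(s,t,\reify{u})$ gives $s\gtB \reify u$, and \AMPO{4} yields $s\gtB u$. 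The case \AMPO{5} is excluded because $t$ is full. Finally, for $t\gtB u$ by \AMPO{2} or \AMPO{3}, $u$ is full. We combine the precedence comparisons, using transitivity of $\succsim$, exactly as in Dershowitz's proof. The subcomparisons $s\gtB u_k$ come from the induction hypothesis on $(s,t,u_k)$. Multiset comparisons are chained using transitivity of $\gtB^\mul$ on the arguments; this transitivity follows from the induction hypothesis, since all arguments have smaller measure. In the mixed case \AMPO{3} followed by \AMPO{2}, we need $s\gtB u_k$ from $t\gtB u_k$, which is the induction hypothesis on $(s,t,u_k)$.

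The main obstacle is the \AMPO{4}/\AMPO{1} interaction. There, $s\gtB\reify t$ must be converted into $s\gtB t_i$ for an argument of $t$ without a subterm property; note that $\gtB$ lacks the subterm property (\cref{ex:add3}). I would try to prove this auxiliary fact first, as a separate lemma by induction on $\|\cdot\|$: if $s \gtB f\,t_1\cdots t_n$, then $s \gtB t_i$ for all $i$. Each rule of \cref{def:ampo} is closed under this step: \AMPO{2} gives the fact directly, \AMPO{3} gives it via the multiset, and \AMPO{1}, \AMPO{4} and \AMPO{5} give it by induction on the reified or smaller term. This fact makes the transitivity argument go through.
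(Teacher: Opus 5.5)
Your proof is correct and follows the paper's skeleton. Both argue by induction on the $\|\cdot\|_1$/$\|\cdot\|_2$ measures, splitting on the rule that derives $s \gtB t$ and then on the rule that derives $t \gtB u$. Your multiset-of-pairs measure is an equally valid replacement for the paper's lexicographic pair $(\|s\|_1+\|t\|_1+\|u\|_1,\ \|s\|_2+\|t\|_2+\|u\|_2)$.

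The one real divergence is the case where $s \gtB t$ is by \AMPO{4} and $t \gtB u$ is by \AMPO{1}.
\begin{itemize}
\item You use an auxiliary lemma (if $s \gtB g\,t_1\cdots t_n$ then $s \gtB t_i$ for all $i$) and then apply the induction hypothesis to $(s,t_i,u)$. That lemma is true, and your sketch of it works.
\item The paper avoids it. From $t_j \gtB^= u$, the term $t_j$ is either the second argument of $\reify{t}$ or an argument of its first argument $g\,t_1\cdots t_{n-1}$. So one or two applications of \AMPO{1} give $\reify{t} \gtB u$ directly.
\item Hence in both subcases (\AMPO{1} and \AMPO{5}) one has $s \gtB \reify{t} \gtB u$. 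A single appeal to the induction hypothesis on $(s,\reify{t},u)$ then closes the case.
\end{itemize}

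Your route yields a reusable right-hand ``argument property''; the paper only needs its easy special case where $s \gtB t$ is by \AMPO{2} or \AMPO{3}. The price is a separate induction, which you should make precise if you keep it.
\begin{itemize}
\item The \AMPO{1} and \AMPO{5} cases shrink the left-hand side, to $s_i$ or $\reify{s}$.
\item The \AMPO{4} case needs the claim twice on the right: first for $\reify{t'}$, then for $g\,t_1\cdots t_{n-1}$. The second comparison is produced by the induction hypothesis and is not a subderivation.
\item A lexicographic induction on $(\|s\|_1+\|t'\|_1,\ \|s\|_2+\|t'\|_2)$ covers all of this.
\end{itemize}

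A minor point: in your remaining case with \AMPO{3}, $s \gtB t_j$ follows from $s_i \gtB^= t_j$ by \AMPO{1} alone, with no induction hypothesis needed.
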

\begin{proof}
Suppose $s \gtB t \gtB u$, where $s = f\,\aseq[k]{s}$,
$t = g\,\aseq[m]{t}$, and $f, g \in \Sigma \cup \cV$.  We show 
$s \gtB u$ by induction on
$(\|s\|_1 + \|t\|_1 + \|u\|_1, \|s\|_2 + \|t\|_2 + \|u\|_2)$ with respect
to lexicographic comparison of pairs of natural numbers.  We distinguish
several cases depending on how $s \gtB t$ and $t \gtB u$
are derived, where we apply the induction hypothesis with decreases in the
first component using \cref{lem:measures}(2) unless otherwise stated.
\begin{enumerate}[(i)]
\item
Suppose $s \gtB t$ is derived by \AMPO{1}. Then
$s_i \gtB t$ holds for some $i$.  As we have
$s_i \gtB t \gtB u$, the induction hypothesis yields 
$s_i \gtB u$.  Thus, $s \gtB u$ follows by
\AMPO{1}.
\item
Suppose $s \gtB t$ is derived by \AMPO{2} or \AMPO{3} and $t \gtB u$ is
derived by \AMPO{1}.  First, observe that $s \gtB t_j$ holds for all $j$,
even when $s \gtB t$ is derived by \AMPO{3}: from 
$\{ \seq[k]{s} \} \gtB^\mul \{ \seq[m]{t} \}$ we obtain $s_i \gtB^= t_j$ for some
$i$, so $s \gtB t_j$ follows by \AMPO{1}.  Now we prove $s \gtB u$.
From $s \gtB u$ we have $t_j \gtB^= u$ for some $j$.  If $t_j = u$ then 
$s \gtB u$ holds.  Otherwise, $s \gtB t_j \gtB u$, from which $s \gtB u$
follows by the induction hypothesis.
\item
Suppose $s \gtB t$ and $t \gtB u$ are derived by \AMPO{2} or \AMPO{3}.
Then $u$ is of the form $h\aseq{u}$ with $t \gtB u_j$ for all $j$.  
\begin{itemize}
\item
If both are due to \AMPO{3}, then the relations
\begin{align*}
f &\succsim g \succsim h
&
\{\seq[k]{s}\} &\gtB^\mul \{\seq[m]{t}\} \gtB^\mul \{\seq[n]{u}\}
\end{align*}
hold.  As $\{\seq[k]{s}\} \gtB^\mul \{\seq[n]{u}\}$ follows by the induction
hypothesis, we obtain $s \gtB u$ by \AMPO{3}.
\item
Otherwise, $f \succ h$ holds. Again, we can show
$t \gtB u_j$ for all $j$. Since we have $s \gtB t \gtB u_j$,
the induction hypothesis yields $s \gtB u_j$ for all $j$.
Hence, we obtain $s \gtB u$ by \AMPO{2}.
\end{itemize}
\item
Suppose $s \gtB t$ is derived by \AMPO{2} or \AMPO{3} and $t \gtB u$ is
by \AMPO{4}.  Then $s$ is full, $u$ is reifiable and not full, and
$t \gtB \reify{u}$ holds.  Since $\|u\|_1 \geqslant \|\reify{u}\|_1$ and 
$\|u\|_2 > \|\reify{u}\|_2$ by \cref{lem:measures}(3), the induction
hypothesis with a decrease in the second component leads to 
$s \gtB \reify{u}$. Therefore, $s \gtB u$ is derived by \AMPO{4}.
\item
Suppose $s \gtB t$ is derived by \AMPO{2} or \AMPO{3} and $t \gtB u$ is
by \AMPO{5}.  Then $t$ is full and not full.  So this case never happens.
\item
Suppose $s \gtB t$ is derived by \AMPO{4}.  Then $s$ is full, $t$ is
reifiable and not full, and $s \gtB \reify{t} \gtB u$ holds.
Since $\|t\|_1 \geqslant \|\reify{t}\|_1$ and $\|t\|_2 > \|\reify{t}\|_2$ by
\cref{lem:measures}(3),
the induction hypothesis applies to $s \gtB \reify{t} \gtB u$
(with a decrease in the second component). Therefore, $s \gtB u$ holds.
\item
Suppose $s \gtB t$ is derived by \AMPO{5}.  Then $s$ is
reifiable and not full, and $\reify{s} \gtB t$.  
Again, with a decrease in the second component
the induction hypothesis applies to $\reify{s} \gtB t \gtB u$.
Thus, we have $\reify{s} \gtB u$ from which $s \gtB u$ follows by \AMPO{5}.
\qedhere
\end{enumerate}
\end{proof}

Well-foundedness of MPO was originally shown by the Kruskal Tree
Theorem~\cite{D82}. However, it relies on the subterm property of the order.
Thus, we employ the alternative proof method by Buchholz~\cite{B95}.

Recall the notion of lexicographic product. Let $A$ be a set.  Consider $n$
quasi-orders $\seq{\gtrsim}$
on sets $\seq{A}$, respectively.
Their \emph{lexicographic product} is the relation $\gg$ on $A_1 \times
\cdots \times A_n$ defined as follows: $(\seq{a}) \gg (\seq{b})$ if there
exists an index $1 \leqslant k \leqslant n$ such that $a_k >_k b_k$ and 
$a_i \gtrsim_i b_i$
for all $i < k$.  It is known that $\gg$ is a well-founded order if
every $>_i$ is a well-founded order, where $>_i$ is the strict part
${\gtrsim_i} \setminus {\lesssim_i}$ of $\gtrsim_i$.

Below, we assume well-foundedness of $\succ$.
Let $\SN$ denote the set of all terms
that have no infinite descending sequence with respect to $\gtB$.
We define the set $U$ of tuples as 
$\Sigma \times \cM(\SN) \times \bN \times \bN$
and the relation $\gg$ on $U$ as the lexicographic product of 
$\succsim$, the reflexive closure of $\gtB^\mul$,
and two copies of the ordinary order on $\bN$.
Then $\gg$ is a well-founded order on $U$.  For a full term $s = f\aseq{s}$
with $f \in \Sigma$ and a term $t$ we define $\phi(s,t)$ as
\(
\phi(s,t) = (f,\{\seq{s}\},\|t\|_1, \|t\|_2)
\).
Note that $\phi(s,t)$ belongs to $U$ whenever $\seq{s} \in \SN$.

\begin{lemma}
\label{lem:sn-key}
Let $s = f\aseq[m]{s}$ be a full term with $f \in \Sigma$.  If 
$\seq[m]{s} \in \SN$ and $s \gtB t$ then $t \in \SN$.
\end{lemma}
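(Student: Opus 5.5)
Since $\phi(s,t)=(f,\{\seq[m]{s}\},\|t\|_1,\|t\|_2)$ lies in $U$ whenever $\seq[m]{s}\in\SN$, we prove the statement by well-founded induction on $\phi(s,t)$ with respect to $\gg$. The claim is: for every full $s=f\,\aseq[m]{s}$ with $\seq[m]{s}\in\SN$ and every $t$ with $s\gtB t$, we have $t\in\SN$. Because $t\in\SN$ iff every $u$ with $t\gtB u$ lies in $\SN$, it suffices in most cases to show $s\gtB u \Rightarrow u\in\SN$ for a smaller instance, or to reduce to an already established membership. We argue by case analysis on the rule deriving $s\gtB t$. Case \AMPO{5} cannot occur since $s$ is full.

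In case \AMPO{1}, $s_i\gtB^= t$ for some $i$. Since $s_i\in\SN$ and $\SN$ is closed under $\gtB$-successors, $t\in\SN$ immediately. In case \AMPO{4}, $t$ is reifiable and not full, and $s\gtB\reify{t}$. By \cref{lem:measures}(3) the tuple $\phi(s,\reify t)$ agrees with $\phi(s,t)$ in the first two components, is $\leqslant$ in the third and strictly smaller in the fourth. The induction hypothesis therefore gives $\reify{t}\in\SN$. Any $t\gtB u$ must then arise by \AMPO{1} or \AMPO{5} (as $t$ is not full). Under \AMPO{5} we get $\reify t\gtB u$, so $u\in\SN$. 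Under \AMPO{1} we get $t_i\gtB^= u$. Then $s\gtB t_i$ should hold, either via \AMPO{1}/\AMPO{3}-style reasoning, or more simply because $\reify t\gtB t_i$ or $\reify t$ has $t_i$ as a direct argument or as a subterm of its first argument. We will organize this part so that $u\in\SN$ follows from $\reify t\in\SN$ together with transitivity (\cref{lem:transitivity}) and a small observation that immediate arguments of elements of $\SN$ are in $\SN$ when reached by \AMPO{1}. An alternative is to note $t\gtB u$ implies $\reify t\gtB u$ in this situation, so $u\in\SN$.

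For cases \AMPO{2} and \AMPO{3}, $t=g\,\aseq{t}$ is full. We first show each $t_j\in\SN$. In \AMPO{2}, $s\gtB t_j$ with $\phi(s,t_j)$ smaller in the last components, because $f$ and the argument multiset are unchanged and $\|t_j\|_1<\|t\|_1$ by \cref{lem:measures}(2). In \AMPO{3}, each $t_j$ satisfies $s_i\gtB^= t_j$ for some $i$, hence $t_j\in\SN$. Next, to show $t\in\SN$, take $t\gtB u$. Since $t$ is full with arguments in $\SN$, we are in the same situation with $t$ in place of $s$, and $\phi(t,u)$ is smaller than $\phi(s,t)$. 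In \AMPO{2} this holds because $f\succ g$. In \AMPO{3} it holds because $f\sim g$ (or $f\succ g$) and $\{\seq[m]{s}\}\gtB^\mul\{\seq{t}\}$ with all elements in $\SN$. The induction hypothesis then gives $u\in\SN$, so $t\in\SN$.

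The main obstacle is case \AMPO{4}: handling successors of the non-full $t$ obtained by \AMPO{1}, since $t$'s arguments are not directly arguments of $\reify t$. The plan is to show that any $t\gtB u$ yields $\reify t\gtB u$, and we will verify this carefully. This uses that $\reify t=\APP\,(g\,\aseq[n-1]{t})\,t_n$, where $t_n$ is an argument and $t_i$ with $i<n$ are arguments of the first argument, so \AMPO{1} applies twice.
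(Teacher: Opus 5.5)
Your proposal is correct and is essentially the paper's proof. It uses well-founded induction on $\phi(s,t)$ with the same case analysis and the same measure decreases: the third component for $\phi(s,t_j)$, the first or second component for $\phi(t,u)$, and \cref{lem:measures}(3) for $\phi(s,\reify{t})$. Successors $u$ of the non-full $t$ in case \AMPO{4} are handled exactly as in the paper, via $\reify{t} = \APP\,(g\,t_1\cdots t_{n-1})\,t_n$.

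The only differences are cosmetic. In case \AMPO{3} you get $t_j \in \SN$ directly from $s_i \gtB^= t_j$, and in case \AMPO{4} you get $\reify{t} \gtB u$ outright by two uses of \AMPO{1} where the paper records $\reify{t} \gtB^* u$. The tentative detour through $s \gtB t_i$ in your \AMPO{4} paragraph is unnecessary and can be dropped.
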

\begin{proof}
Suppose $\seq[m]{s} \in \SN$ and $s \gtB t$.
We show 
$t \in \SN$ by well-founded induction on $\phi(s,t)$.  Depending on how 
$s \gtB t$ is derived, we distinguish cases.
\begin{enumerate}[1.]
\item
If $s \gtB t$ is derived by \AMPO{1} then $s_i = t$ or $s_i \gtB t$ holds.
In either case, the claim follows from $s_i \in \SN$.
\item
If $s \gtB t$ is derived by \AMPO{2} then $t$ is a full term 
$g\,\aseq{t}$ with $g \in \Sigma$ and $f \succ g$, and
$s \gtB t_j$ for all $j$.  Since $\phi(s, t) \gg \phi(s, t_j)$ with a
decrease at the third, $t_j \in \SN$ for each $j$.  In order to show 
$t \in \SN$ it is enough to show that $u \in \SN$ whenever $t \gtB u$.  So
suppose $t \gtB u$.
We have $\phi(s,t) \gg \phi(t,u)$ as $f \succ g$.
Hence, the claim follows by the induction hypothesis.  
\item
If $s \gtB t$ is derived by \AMPO{3} then we apply an argument similar to
the previous case, using a decrease in the second component of the
induction measure $\phi(s, t)$.
\item
If $s \gtB t$ is derived by \AMPO{4} then $t$ is reifiable and not full, and
$s \gtB \reify{t}$.  Note that $t$ is of the form $g\,\aseq{t}$
with $g \in \Sigma \cup \cV$ and $n > 0$. In order to show $t \in \SN$ it
is enough to show that $u \in \SN$ whenever $t \gtB u$.  So suppose
$t \gtB u$.  Using \cref{lem:measures}(3) we have
$\phi(s,t) \gg \phi(s,\reify{t})$ as the fourth component decreases.  
So, $\reify{t} \in \SN$ by the induction hypothesis.  Thus,
$u \in \SN$ is concluded if we show $\reify{t} \gtB^* u$.
Since $t$ is not full, $t \gtB u$ is derived by \AMPO{1} or \AMPO{5}.
In the latter case, $\reify{t} \gtB u$ is immediate.  In the former case
$t_j \gtB^= u$ holds for some $j$.  If $j < n$ then
$\reify{t} \gtB g\, \aseq[n-1]{t} \gtB t_j \gtB^= u$.
If $j = n$ then $\reify{t} \gtB t_j \gtB^= u$. 
So $\reify{t} \gtB^* u$ holds.
\item
If $s \gtB t$ is derived by \AMPO{5} then $s$ is not full, which contradicts 
the assumption.
\qedhere
\end{enumerate}
\end{proof}

\begin{lemma}
\label{lem:well-foundedness}
The relation $\gtB$ is well-founded.
\end{lemma}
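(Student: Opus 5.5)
We show that every term $t$ belongs to $\SN$, arguing by induction on $(\|t\|_1, \|t\|_2)$ ordered lexicographically. Write $t = f\,\aseq{t}$ with $f \in \Sigma \cup \cV$. By \cref{lem:measures}(2) each argument $t_i$ has $\|t_i\|_1 < \|t\|_1$, so the induction hypothesis gives $\seq{t} \in \SN$. We then split into three cases according to whether $t$ is full, reifiable but not full, or neither.

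\emph{Case: $t$ is full.} Then $f \in \Sigma$, and \cref{lem:sn-key} applies directly. Every $u$ with $t \gtB u$ lies in $\SN$, and therefore $t \in \SN$.

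\emph{Case: $t$ is neither full nor reifiable.} Then $n = 0$, so $t$ is a variable or a constant with no arguments. Cases \AMPO{2} and \AMPO{3} require $t$ to be full. Case \AMPO{1} requires an argument. Case \AMPO{5} requires $t$ to be reifiable. Hence none of the rules applies with $t$ on the left, and $t$ has no $\gtB$-successor at all. In particular $t \in \SN$.

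\emph{Case: $t$ is reifiable and not full.} This is the delicate case, because $\reify{t}$ is not a subterm of $t$ and can be as large as $t$. By \cref{lem:measures}(3) we have $\|t\|_1 \geqslant \|\reify{t}\|_1$ and $\|t\|_2 > \|\reify{t}\|_2$. So $(\|\reify{t}\|_1, \|\reify{t}\|_2)$ is lexicographically smaller than $(\|t\|_1, \|t\|_2)$, and the induction hypothesis yields $\reify{t} \in \SN$. Now let $t \gtB u$. Since $t$ is not full, this step arises from either \AMPO{1} or \AMPO{5}.
\begin{itemize}
\item If it arises from \AMPO{5}, then $\reify{t} \gtB u$ holds directly.
\item If it arises from \AMPO{1}, then $t_j \gtB^= u$ for some $j$. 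We use the same argument as in case 4 of \cref{lem:sn-key}. If $j < n$, then $\reify{t} \gtB g\,\aseq[n-1]{t} \gtB t_j$ by \AMPO{1} twice. If $j = n$, then $\reify{t} \gtB t_n$ by \AMPO{1}.
\end{itemize}
In either subcase \cref{lem:transitivity} gives $\reify{t} \gtB u$. Since $\reify{t} \in \SN$, it follows that $u \in \SN$. As $u$ was an arbitrary successor of $t$, we conclude $t \in \SN$.

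Thus every term is in $\SN$, so $\gtB$ is well-founded. The main obstacle is exactly this third case: the induction must be measured by $\|\cdot\|_1,\|\cdot\|_2$ rather than by term size, so that passing from $t$ to its reification $\reify{t}$ counts as a decrease. The full case is handled entirely by \cref{lem:sn-key}.
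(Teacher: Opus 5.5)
Your proof is correct, but it is organized differently from the paper's. The paper uses plain induction on the size $|s|$. In the non-full case, an \AMPO{1} step $s_i \gtB^= t$ is settled directly by $s_i \in \SN$. An \AMPO{5} step $\reify{s} \gtB t$ is settled by applying \cref{lem:sn-key} to $\reify{s} = \APP\,(f\,s_1\cdots s_{n-1})\,s_n$. That term is full because $\alpha(\APP)=2$, and its two arguments are proper subterms of $s$, so they lie in $\SN$ by the size induction hypothesis.

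You instead apply the induction hypothesis to $\reify{t}$ as a whole. This forces you to use the lexicographic measure $(\|\cdot\|_1,\|\cdot\|_2)$ and \cref{lem:measures}(3). You then route the \AMPO{1} steps through $\reify{t}$ as in case~4 of \cref{lem:sn-key}.

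Both arguments are sound. The paper's route is lighter: it needs no measures, and the \AMPO{1} subcase is immediate from $t_j \in \SN$, which you had already established. Your route treats the reification uniformly, but in effect re-derives what \cref{lem:sn-key} already provides. You also do not need \cref{lem:transitivity}, since $\reify{t} \gtB^* u$ suffices once $\SN$ is closed under $\gtB$-successors.

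Your closing remark that the induction \emph{must} use $\|\cdot\|_1,\|\cdot\|_2$ rather than term size is therefore not right. You never need the induction hypothesis for $\reify{t}$ itself, only for its arguments, and these are proper subterms of $t$.

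One small slip: in the no-successor case you should also note that \AMPO{4} requires the left-hand side to be full.
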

\begin{proof}
Let $s = f\aseq{s}$ with $f \in \Sigma \cup \cV$.  We show $s \in \SN$ by
induction on $|s|$.  By the induction hypothesis we have
$\seq[n]{s} \in \SN$.  It follows if we prove $t \in \SN$ for all terms $t$
with $s \gtB t$.  Let $t$ be a term with $s \gtB t$.   
If $s$ is full, then we conclude by \cref{lem:sn-key}.
Suppose that $s$ is not full.  Then $s \gtB t$ is due to \AMPO{1}
or \AMPO{5}. In the former case we have $s_i \gtB^= t$. Since 
$s_i \in \SN$, we obtain $t \in \SN$.
In the latter case  $s$ is reifiable and not full
and $\reify{s} = \APP \, (f\,\aseq[n-1]{s}) \, s_n \gtB t$
holds.
Because $f\,\aseq[n-1]{s} \in \SN$ by the induction hypothesis, 
$t \in \SN$ follows by \cref{lem:sn-key}.
\qed
\end{proof}

By combining the lemma with \cref{lem:transitivity} the announced claim
is concluded.

\begin{theorem}
\label{thm:wfo}
The relation $\gtB$ is a well-founded order if $\succ$ is well-founded.
\end{theorem}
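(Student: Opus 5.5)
The theorem follows by assembling the two lemmas just proved, together with irreflexivity. A strict order means an irreflexive and transitive relation.

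Transitivity is exactly \cref{lem:transitivity}, which needs no assumption on the precedence.

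Well-foundedness is \cref{lem:well-foundedness}. Its proof, through \cref{lem:sn-key}, uses induction on $\phi(s,t)$ in the lexicographic product $\gg$. That product is well-founded only because $\succ$ is well-founded, so the hypothesis of the theorem is used precisely there.

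Irreflexivity then follows directly from well-foundedness: if $s \gtB s$ held, the constant sequence $s \gtB s \gtB s \gtB \cdots$ would be an infinite descending chain, contradicting \cref{lem:well-foundedness}. So $\gtB$ is irreflexive.

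Combining transitivity, irreflexivity and well-foundedness shows that $\gtB$ is a well-founded strict order. The substantive work lies in the two lemmas. The only delicate point to check is that \cref{lem:well-foundedness} really establishes membership in $\SN$ for every term, variables included: a variable $x$ has no arguments, is not reifiable and is not full, so no case of \cref{def:ampo} applies and $x \gtB t$ holds for no $t$. Hence $x \in \SN$ trivially, and the base of the induction on $|s|$ is covered.
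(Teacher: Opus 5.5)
Your proposal is correct and follows the paper's route: \cref{thm:wfo} is obtained by combining \cref{lem:transitivity} with \cref{lem:well-foundedness}, and the well-foundedness of $\succ$ enters through the lexicographic product $\gg$ used in \cref{lem:sn-key}. The paper leaves implicit both the derivation of irreflexivity from well-foundedness and the observation that variables (and non-full constants) have no $\gtB$-successors, and you have handled both correctly.
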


\section{Eta-Restriction}
\label{sec:mono}

The order $\gtB$ is not closed under contexts.

\begin{example}
\label{ex:monotonicity}
Let $\alpha$ be the arity assignment with
$\alpha(\APP) = \alpha(\m{g}) = 2$ and $\alpha(\m{f}) = 1$.
\begin{enumerate}[(a)]
\item
Assume $\m{f} \sim \m{g} \succ \APP$.
As $\m{f}\,\m{g} \gtB \APP\,\m{g}\,\m{g}$ holds,
$\m{f}\,\m{g} \gtB \m{g}\,\m{g}$ follows by \AMPO{4}.
However,
$\m{f}\,\m{g}\,x \gtB \m{g}\,\m{g}\,x$ does not hold
as $\m{g}~\m{g}~x \gtB \m{f}\,\m{g}\,x$ follows by \AMPO{4}.
\item
Assume $\m{f} \succ \m{g}$. Then $\m{g}\,\m{f} \gtB \m{f}$ holds by
\AMPO{1}.  However, $\m{g}\,\m{f}\,x \gtB \m{f}\,x$ does not hold.  In
fact, \AMPO{1} does not apply because neither $\m{f} \gtB^= \m{f}\,x$ nor
$x \gtB^= \m{f}\,x$ holds.
\end{enumerate}
Note that the right-hand side of $\m{f}\,\m{g} \gtB \m{g}\,\m{g}$ is
partial, while the left-hand side of $\m{g}\,\m{f} \gtB \m{f}$ is partial.
\end{example}

The example (a) is avoided by imposing maximality of $\APP$
on the precedence. We address  (b) by a technique
akin to $\eta$-saturation~\cite{HMZ13}.

\begin{definition}
\label{def:eta}
The $\eta$-restriction $\gtE$ of $\gtB$ is recursively defined as
follows: $s \gtE t$ if $s \gtB t$ holds, and when $s$ is partial 
$s\,x \gtE t\,x$ additionally holds.  Here $x$ is a fresh variable not in
$\Var(s)$.
\end{definition}

If $s = f\,\aseq[n]{s}$ with $f \in \Sigma$ is partial,
by definition $s \gtE t$ if and only if $s \gtB t$ and
\(
s \, x_{n+1} \cdots x_k \gtB
t \, x_{n+1} \cdots x_k 
\)
for all $n + 1 \leqslant k \leqslant \alpha(f)$,
where $x_{n+1},\ldots,x_{\alpha(f)}$ are distinct fresh variables not occurring in $\Var(s)$.
We can see that
\cref{ex:monotonicity}(b) does not go for $\gtE$:
The inequality $\m{g}\,\m{f} \gtE \m{f}$ does not hold because $\m{g}\,\m{f}\,x \gtB \m{f}\,x$ does not.

The rest of the section is for proving that $\gtE$ is 
closed under contexts and is a
well-founded order if $\APP$ is maximum in the precedence.
First, we prove a key lemma using the assumption.

\begin{lemma}
\label{lem:reification}
Suppose $\APP$ is maximum in the precedence.
If a term of the form $s\,t$ is full then $\APP\,s\,t \gtB s\,t$.
\end{lemma}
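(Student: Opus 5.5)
Write $s\,t = f\,\aseq[n]{u}$ with $f \in \Sigma$, $\alpha(f) = n \geqslant 1$ (fullness forces $f \in \Sigma$ and $n \geqslant 1$, since the term is an application), $u_n = t$ and $s = f\,\aseq[n-1]{u}$. The term $\APP\,s\,t$ is itself full, because $\alpha(\APP) = 2$. The plan is to derive $\APP\,s\,t \gtB s\,t$ directly from \cref{def:ampo}, splitting on whether $f \sim \APP$.

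\emph{Case $\APP \succ f$.} I apply \AMPO{2}, which requires $\APP\,s\,t \gtB u_j$ for all $j$. For $j = n$ this is \AMPO{1}, since $u_n = t$ is the second argument. For $j < n$ I still use \AMPO{1}, now through the first argument, so I need $s \gtB^= u_j$. If $n = 1$ there are no such $j$. If $n \geqslant 2$, then $s = f\,\aseq[n-1]{u}$ is reifiable and partial, hence not full. By \AMPO{5} it suffices to show $\reify{s} = \APP\,(f\,\aseq[n-2]{u})\,u_{n-1} \gtB u_j$. For $j = n-1$ this is \AMPO{1}; for $j < n-1$ it reduces to $f\,\aseq[n-2]{u} \gtB u_j$. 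So I prove the auxiliary claim $f\,\aseq[k]{u} \gtB u_j$ for all $j \leqslant k < n$, by induction on $k$. The term $f\,\aseq[k]{u}$ is never full for $k < n$, and \AMPO{5} with \AMPO{1} peels off one argument at a time.

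\emph{Case $f \sim \APP$.} Here \AMPO{2} is unavailable, and I use \AMPO{3}. This needs $\{s, t\} \gtB^\mul \{\seq[n]{u}\}$. I keep $t = u_n$ and replace $s$ by $\seq[n-1]{u}$, each strictly below $s$ by the auxiliary claim. If $n = 1$, then $s = f$ and the right multiset is $\{t\}$, which is obtained from $\{f, t\}$ by removing $f$, so it is still a multiset decrease.

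The main obstacle is this auxiliary claim, namely that a proper prefix application, being non-full, dominates each of its arguments through reification. The point is that the subterm property fails in general but holds for non-full heads via \AMPO{5}.
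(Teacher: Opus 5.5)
Your proof is correct and is essentially the paper's argument: derive $\APP\,s\,t \gtB s\,t$ by \AMPO{3} (available since $\APP \succsim f$) from $\{s,t\} \gtB^\mul \{\seq[n]{u}\}$, with the case $s = f$ treated separately. Two detours are unnecessary, though: \AMPO{3} also covers $\APP \succ f$, so the \AMPO{2} branch can be dropped; and your auxiliary claim $f\,\aseq[k]{u} \gtB u_j$ follows at once from \AMPO{1}, which applies to every argument of the head symbol regardless of fullness (what fails in general is domination of proper prefixes, such as $\m{s}\,x \gtB \m{s}$ in \cref{ex:add3}, not of arguments), so the induction through \AMPO{5} is not needed and there is no real obstacle there.
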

\begin{proof}
Suppose $s\,t$ is full, where $s = f\,\aseq[n]{s}$ with $f \in \Sigma$.  We
show that $\APP\,s\,t \gtB s\,t$ is derived by \AMPO{3}.  As $\APP$ is
maximum, $\APP \succsim f$ holds. It remains to show
$\{s,t\} \gtB^\mul \{\seq[n]{s},t\}$.
If $n = 0$ then $\{s,t\} = \{f,t\} \gtB^\mul \{t\}$.
If $n > 0$ then \AMPO{1} derives
$s = f\,\aseq{s} \gtB s_i$ for all $i$.  
Thus, $\{s,t\} \gtB^\mul \{\seq[n]{s},t\}$.
\qed
\end{proof}

When $s$ is partial, $s \gtE t$ demands $s \gtB t$ and $s\,x \gtE t\,x$
for some variable $x$ not in $\Var(s)$.  Due to the next property, 
$x \notin \Var(s)$ is equivalent to $x \notin \Var(s) \cup \Var(t)$.

\begin{lemma}
\label{lem:sensibility}
If $s \gtB t$ then $\Var(s) \supseteq  \Var(t)$.
\end{lemma}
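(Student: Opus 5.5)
We prove the claim by induction on the derivation of $s \gtB t$, or equivalently by well-founded induction on the lexicographic measure $(\|s\|_1 + \|t\|_1, \|s\|_2 + \|t\|_2)$ used in \cref{lem:transitivity}. Write $s = f\,\aseq[m]{s}$ with $f \in \Sigma \cup \cV$. Two facts about reification are used throughout. First, $\Var(\reify{u}) = \Var(u)$ for every reifiable $u$, because $\APP$ is a constant and $\reify{u}$ only regroups the same subterms. Second, $\Var(f\,\aseq{t}) = \Var(f) \cup \bigcup_j \Var(t_j)$. The proof then splits according to the rule that derived $s \gtB t$.

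\begin{enumerate}[(i)]
\item
Case \AMPO{1}: $s_i \gtB^= t$ for some $i$. If $s_i = t$, then $\Var(t) = \Var(s_i) \subseteq \Var(s)$ directly. Otherwise $s_i \gtB t$, and the induction hypothesis gives $\Var(t) \subseteq \Var(s_i) \subseteq \Var(s)$.
\item
Cases \AMPO{2} and \AMPO{3}: here $t = g\,\aseq{t}$ is full, so $g \in \Sigma$ is a constant and $\Var(t) = \bigcup_j \Var(t_j)$. Under \AMPO{2} we have $s \gtB t_j$ for all $j$, which are smaller subderivations (the measure drops by \cref{lem:measures}(2)), so the induction hypothesis gives $\Var(t_j) \subseteq \Var(s)$. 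Under \AMPO{3}, the multiset inequality gives, for each $j$, some $i$ with $s_i \gtB^= t_j$; the induction hypothesis (or equality) gives $\Var(t_j) \subseteq \Var(s_i) \subseteq \Var(s)$.
\item
Case \AMPO{4}: $s \gtB \reify{t}$. The induction hypothesis gives $\Var(\reify{t}) \subseteq \Var(s)$. The measure decreases in the second component by \cref{lem:measures}(3), and $\Var(\reify{t}) = \Var(t)$.
\item
Case \AMPO{5}: $\reify{s} \gtB t$. Symmetrically, $\Var(t) \subseteq \Var(\reify{s}) = \Var(s)$.
\end{enumerate}

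The only delicate point is the well-foundedness of the induction in the reification cases, since $\reify{t}$ is not a subterm of $t$. Framing the argument as induction on the inductive derivation of $\gtB$ sidesteps this entirely: every premise in \cref{def:ampo} is a strictly smaller derivation. Alternatively, the lexicographic measure already justified in \cref{lem:measures} covers these cases, so no new obstacle arises.
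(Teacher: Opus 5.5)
Your proposal is correct and matches the paper's proof: induction on the derivation of $s \gtB t$ with the same five-way case split, using $\Var(\reify{u}) = \Var(u)$ in cases \AMPO{4} and \AMPO{5}. The alternative lexicographic measure you mention also works, but it is not needed because $\gtB$ is inductively defined.
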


We show that $s\,x \gtB t\,x$ with a fresh variable $x$ is equivalent
to $s\,u \gtB t\,u$ for all terms $u$.  Although $\gtB$ is \emph{not}
closed under substitutions (see \cref{ex:ampo,ex:non-stable}),
it \emph{is} closed under head-preserving substitutions.
For a term $t$ we define
\[
\HV(t) = \{ x \in \cV \mid \text{$t = C[x\,u]$ for some context $C$ and term $u$} \}.
\]
Variables in $\HV(t)$ are called \emph{head variables}~\cite{AY04}.
We say that a substitution $\sigma$ is \emph{head-preserving} for $t$ if
$\sigma(x) = x$ for all $x \in \HV(t)$. 
For example, $\HV(t) = \{ f \}$ for
$t = \m{cons}~(f~x)~(\m{map}~f~\mi{xs})$. So $\{ \mi{xs} \mapsto \m{nil}
\}$ is head-preserving for $t$, but $\{ f \mapsto \m{id} \}$ is \emph{not}.

\begin{lemma}
\label{lem:head-preservation}
Suppose $\sigma$ is head-preserving for $s$ and $t$.
If $s \gtB t$ then $s\sigma \gtB t\sigma$.
\end{lemma}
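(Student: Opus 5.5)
We proceed by induction on the derivation of $s \gtB t$; since transitivity was handled through the measures $\|\cdot\|_1,\|\cdot\|_2$, we may equally take induction on $(\|s\|_1+\|t\|_1, \|s\|_2+\|t\|_2)$ lexicographically, as in \cref{lem:transitivity}. The crucial observation is that head-preservation makes $\sigma$ respect the \emph{shape} of terms. Write $s = f\,\aseq[m]{s}$ with $f \in \Sigma \cup \cV$. If $f \in \Sigma$, then $s\sigma = f\,(s_1\sigma)\cdots(s_m\sigma)$. If $f$ is a variable and $m > 0$, then $f \in \HV(s)$, so $f\sigma = f$ and again $s\sigma = f\,(s_1\sigma)\cdots(s_m\sigma)$. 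The only shape-changing case is $m = 0$ with $s = x$ a variable, where $x\sigma$ is arbitrary; but a bare variable is neither full nor reifiable and has no arguments, so it never occurs as the left-hand side of $\gtB$.

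Consequently, for every non-variable or reifiable term $u$ for which $\sigma$ is head-preserving, $u\sigma$ is full (respectively partial, reifiable, not full) exactly when $u$ is, and $\reify{u\sigma} = \reify{u}\sigma$. Moreover, $\HV(s_i) \subseteq \HV(s)$, and $\HV(\reify{u}) \subseteq \HV(u)$. For the latter, note that the head $f$ of $f\,\aseq[n-1]{u}$ with $n-1 = 0$ may cease to be a head variable in $\reify{u} = \APP\,f\,u_n$, which only helps. Hence $\sigma$ remains head-preserving for all subterms and reifications that occur in the premises of the rules.

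With these facts we go through the cases of \cref{def:ampo}.
\begin{itemize}
\item Case \AMPO{1}: $s_i \gtB^= t$, and the induction hypothesis gives $s_i\sigma \gtB^= t\sigma$. The only subtlety is that $t$ may be a variable; but then $s_i\sigma \gtB^= t\sigma$ is obtained either trivially (equality) or from the hypothesis, since $t$ being the right-hand side imposes no shape constraint.
\item Cases \AMPO{2} and \AMPO{3}: fullness and heads are preserved by the observation above. The premises $s \gtB t_j$ follow from the induction hypothesis. For the multiset comparison, each pair $s_i \gtB^= t_j$ is mapped to $s_i\sigma \gtB^= t_j\sigma$, and the multiset decomposition $N = (M-X)\uplus Y$ transfers verbatim.
\item Cases \AMPO{4} and \AMPO{5}: reifiability and non-fullness are preserved. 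Since $\reify{t}\sigma = \reify{t\sigma}$ (and likewise for $s$), the induction hypothesis applied to the premise $s \gtB \reify{t}$ or $\reify{s} \gtB t$ closes the case. Here the measure decreases in the second component by \cref{lem:measures}(3) if we use measure induction; with induction on derivations it is immediate.
\end{itemize}

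The main obstacle is the bookkeeping in the first paragraph. One must check carefully that a variable at the head position of a term with at least one argument is always in $\HV$, i.e.\ that $x\,u_1\cdots u_m$ contains the subterm $x\,u_1$. One must also ensure that head-preservation is inherited by every term on which the induction hypothesis is invoked, namely arguments, $f\,\aseq[n-1]{t}$, and reifications. Once this is in place, every rule transfers syntactically.
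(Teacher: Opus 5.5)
Your proposal is correct and matches the paper's proof: induction on the derivation of $s \gtB t$, using that head-preservation is inherited by the terms in the premises and fixes the head symbol, so fullness and reifiability are unchanged and each of the five cases follows from the induction hypothesis. Your explicit checks that $\HV(\reify{u}) \subseteq \HV(u)$ and $\reify{u\sigma} = \reify{u}\sigma$ spell out a detail the paper leaves implicit in cases~\AMPO{4} and~\AMPO{5}, where it mentions only subterms.
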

\begin{proof}
The claim is shown by easy induction on the derivation of $s \gtB t$.
\qed
\end{proof}

For a term $s = f\,\aseq{s}$ with $f \in \Sigma \cup \cV$, we define
$\|s\|_\eta$ as $\max\{0,\alpha(f) - n\}$ if $f \in \Sigma$, and $0$
otherwise. If $s$ is partial then $\|s\|_\eta > \|s\,x\|_\eta$.  This
measure $\|s\|_\eta$ indicates the number of arguments to make $s$ 
non-partial.

\begin{lemma}
\label{lem:head-preservation-eta}
Suppose $\sigma$ is head-preserving for $s$ and $t$.  If $s \gtE t$ then
$s\sigma \gtE t\sigma$.
\end{lemma}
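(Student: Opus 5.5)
We argue by induction on $\|s\|_\eta$, following the recursive structure of \cref{def:eta}. Suppose $s \gtE t$. Then $s \gtB t$, and since $\sigma$ is head-preserving for $s$ and $t$, \cref{lem:head-preservation} yields $s\sigma \gtB t\sigma$. If $s\sigma$ is not partial, this already gives $s\sigma \gtE t\sigma$.

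So assume $s\sigma$ is partial. First we check that $s$ itself is partial. Write $s = f\,\aseq{s}$ with $f \in \Sigma \cup \cV$. If $f \in \Sigma$, then $s\sigma = f\,\aseq{s\sigma}$ has the same head and the same number of arguments, so $s\sigma$ is partial exactly when $s$ is. If $f \in \cV$ with $n > 0$, then $f \in \HV(s)$, so $f\sigma = f$ and $s\sigma$ is headed by a variable, hence not partial. If $s$ is a variable $x$, then $x \gtB t$ is impossible, since every case of \cref{def:ampo} requires $s$ to have a head symbol or to be reifiable, and $x$ is neither.

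Hence $s$ is partial, and by \cref{def:eta} we have $s\,x \gtE t\,x$ for a fresh variable $x \notin \Var(s)$. By \cref{lem:sensibility}, $x \notin \Var(t)$ as well. We may take $x$ also outside the domain and range of $\sigma$, because the definition allows any fresh variable. The apparent dependence on the choice of $x$ is harmless: renaming $x$ is a head-preserving substitution, so we may pick $x$ fresh for $\sigma$ as well. Strictly, this renaming step itself relies on the lemma at smaller measure, so it fits inside the induction.

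Now let $\sigma' = \sigma \cup \{x \mapsto x\}$. This substitution is head-preserving for $s\,x$ and $t\,x$, since $\HV(s\,x) = \HV(s)$ when $s$ has a constant head, $\HV(t\,x) \subseteq \HV(t) \cup \{\text{head of } t\}$, and $\sigma'$ fixes $x$. The head-of-$t$ case needs attention: if $t$'s head is a variable $y$, then $y \in \HV(t\,x)$. If $t$ is a bare variable $y$, then $y \notin \HV(t)$, so $\sigma$ need not fix $y$. This is the main obstacle.

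To handle it, we will show that a bare variable $t = y$ cannot arise here. If $s \gtE y$ with $s$ partial, then $s\,x \gtB y\,x$ must hold. By \cref{lem:measures}-style reasoning, this seems to fail for the non-full $s\,x$ with variable head, since only \AMPO{1} or \AMPO{5} would apply. Failing that, we would instead use the reformulation after \cref{def:eta} with fresh $x_{n+1},\ldots$ and apply \cref{lem:head-preservation} to each $s\,\vec{x} \gtB t\,\vec{x}$ directly, arguing that a bare-variable $t$ forces $y$ to be a head variable only in the extended term.

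Once $\sigma'$ is known to be head-preserving, since $\|s\,x\|_\eta < \|s\|_\eta$, the induction hypothesis gives $s\sigma\,x = (s\,x)\sigma' \gtE (t\,x)\sigma' = t\sigma\,x$. With $x \notin \Var(s\sigma)$, this yields $s\sigma \gtE t\sigma$.
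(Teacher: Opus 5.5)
Your skeleton is the paper's own argument: induction on $\|s\|_\eta$, \cref{lem:head-preservation} for $s\sigma \gtB t\sigma$, partiality of $s$, \cref{lem:sensibility}, an extended substitution, and the induction hypothesis for $s\,x \gtE t\,x$. You also correctly isolate the one place where it breaks. If $t$ is a bare variable $y$, then $y \in \HV(t\,x)$ while $\sigma(y)$ may differ from $y$, so the extended substitution need not be head-preserving for $t\,x$. The paper's proof passes over exactly this point when it says its $\tau$ is ``clearly'' head-preserving.

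Your way out fails, however: the case does arise. Note that $s\,x$ has the constant head of $s$ and may be full, so \AMPO{4} is available. For instance, take $\alpha(\m{g}) = 2$ and $\m{g} \succ \APP$. Then $\m{g}\,z\,x \gtB \APP\,z\,x$ by \AMPO{2}, hence $\m{g}\,z\,x \gtB z\,x$ by \AMPO{4}, and so $\m{g}\,z \gtE z$. Your fallback fails for the same reason as the main route: $\sigma$ is not head-preserving for $y\,x_{n+1}\cdots x_k$, so \cref{lem:head-preservation} does not apply to those comparisons.

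In fact this case cannot be closed in the stated generality. Add $\alpha(\m{id}) = 1$ and take the precedence $\m{id} \succ \m{g} \succ \APP$. Then $\sigma = \{z \mapsto \m{id}\}$ is head-preserving for $\m{g}\,z$ and $z$, since neither has head variables. But $\m{g}\,\m{id}\,x \gtB \m{id}\,x$ fails:
\begin{itemize}
\item \AMPO{1} fails, because variables are $\gtB$-minimal and $\m{id} \gtB \m{id}\,x$ would contradict \cref{lem:sensibility};
\item \AMPO{2} and \AMPO{3} need $\m{g} \succsim \m{id}$;
\item \AMPO{4} and \AMPO{5} are excluded because both sides are full.
\end{itemize}
Hence $\m{g}\,\m{id} \not\gtE \m{id}$.

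What is missing is the hypothesis that $\APP$ is maximum, plus a separate argument for a bare variable $t = y$. Under that hypothesis, write $s = f\,s_1\cdots s_n$ and consider the full instance $s\,x_{n+1}\cdots x_{\alpha(f)} \gtB y\,x_{n+1}\cdots x_{\alpha(f)}$. Only \AMPO{4} followed by \AMPO{3} with $f \sim \APP$ can derive it. In the resulting multiset comparison the last fresh variable must cancel. Otherwise $y\,x_{n+1}\cdots x_{\alpha(f)-1}$ would have to be dominated by some $s_i$ or $x_j$, which is impossible by \cref{lem:sensibility} and minimality of variables. This forces $\alpha(f) = n+1$ and $\{s_1,\ldots,s_n\} \gtB^\mul \{y\}$.

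From this, $(s\sigma)\,x \gtB (y\sigma)\,x$ follows by \AMPO{3}, preceded by \AMPO{4} if $(y\sigma)\,x$ is not full. The ingredients are \cref{lem:head-preservation} applied to each $s_i \gtB y$, \AMPO{1} for the arguments of $y\sigma$, and \cref{lem:transitivity}. Since $(s\sigma)\,x$ is full, this is all that $\gtE$ demands.
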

\begin{proof}
Suppose $s = f\,\aseq{s} \gtE t$ with $f \in \Sigma \cup \cV$.
We show $s\sigma \gtE t\sigma$ by induction on
$\|s\|_\eta$.
By \cref{lem:head-preservation} we have $s\sigma \gtB t\sigma$.  
If $s\sigma$ is not partial then we immediately conclude $s\sigma \gtE t\sigma$.
Otherwise, we have to show $(s\sigma)\,x \gtE (t\sigma)\,x$ for a variable $x$
not in $\Var(s\sigma)$.
Since $s\sigma$ is partial and $\sigma$ is head preserving for $s$,
the term $s$ must be partial too.
So from $s \gtE t$ we have $s\,y \gtE t\,y$ for some variable $y \notin \Var(s)$.
By \cref{lem:sensibility} we have $y \notin \Var(t)$.
Define the substitution $\tau$ as follows: $\tau(y) = x$ and
$\tau(z) = \sigma(z)$ for other variables $z$.
Clearly, $\tau$ is head-preserving for $s\,y$ and $t\,y$,
and moreover, since $y$ occurs in neither $s$ nor $t$,
we obtain $s\tau = s\sigma$ and $t\tau = t\sigma$.
Finally, as $\|s\|_\eta > \|s\,y\|_\eta$,
the induction hypothesis for $s\,y \gtE t\,y$
yields 
\(
(s\sigma)\,x = (s\,y)\tau \gtE (t\,y)\tau = (t\sigma)\,x
\).
Hence, $s\sigma \gtE t\sigma$ is concluded.
\qed
\end{proof}

The technical lemma below allows us to conclude $s\,u \gtB t\,u$
(without $\eta$) from $s \gtE t$ (with $\eta$).

\begin{lemma}
\label{lem:right}
Suppose $\APP$ is maximum in the precedence.
Let $u$ be a term.  If $s \gtE t$ then $s\,u \gtB t\,u$.
\end{lemma}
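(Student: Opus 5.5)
The plan is a case split on whether the term $s\,u$ is full. In the non-full case I reduce to reification via \AMPO{5}. In the full case I use the $\eta$-part of $s \gtE t$ together with \cref{lem:head-preservation}. Throughout, write $s = f\,\aseq{s}$ with $f \in \Sigma \cup \cV$. Note that $s\,u$ and $t\,u$ are always reifiable, with $\reify{s\,u} = \APP\,s\,u$ and $\reify{t\,u} = \APP\,t\,u$.

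\emph{Case 1: $s\,u$ is not full.} By \AMPO{5} it suffices to show $\APP\,s\,u \gtB t\,u$. Here $\APP\,s\,u$ is full because $\alpha(\APP) = 2$. Since $s \gtE t$ gives $s \gtB t$, we have $\{s,u\} \gtB^\mul \{t,u\}$. As $\APP \succsim \APP$, \AMPO{3} yields $\APP\,s\,u \gtB \APP\,t\,u$. Two subcases remain.
\begin{itemize}
\item If $t\,u$ is not full, then \AMPO{4} applied to $\APP\,s\,u \gtB \reify{t\,u} = \APP\,t\,u$ gives $\APP\,s\,u \gtB t\,u$.
\item If $t\,u$ is full, then \cref{lem:reification} (which uses the maximality of $\APP$) gives $\APP\,t\,u \gtB t\,u$. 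Transitivity (\cref{lem:transitivity}) then yields $\APP\,s\,u \gtB t\,u$.
\end{itemize}

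\emph{Case 2: $s\,u$ is full.} Then $f \in \Sigma$ and $\alpha(f) = n+1$, so $s$ is partial. By \cref{def:eta}, $s \gtE t$ gives $s\,x \gtE t\,x$, and in particular $s\,x \gtB t\,x$, for some variable $x \notin \Var(s)$. By \cref{lem:sensibility}, also $x \notin \Var(t)$.

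Let $\sigma = \{x \mapsto u\}$. The variable $x$ occurs in $s\,x$ and $t\,x$ only as the last argument at the root. So $x$ is never applied to anything, i.e.\ $x \notin \HV(s\,x) \cup \HV(t\,x)$. Hence $\sigma$ is head-preserving for both terms. \cref{lem:head-preservation} then gives $(s\,x)\sigma \gtB (t\,x)\sigma$, which is exactly $s\,u \gtB t\,u$ because $x$ does not occur in $s$ or $t$.

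I expect no serious obstacle; the work is bookkeeping. The delicate points are the following.
\begin{itemize}
\item In Case 1 with $t\,u$ full, \AMPO{3} cannot be applied directly against the arguments of $t$. The detour through \cref{lem:reification} and transitivity is what makes this subcase go through.
\item In Case 2, one must check that $x \notin \HV(s\,x) \cup \HV(t\,x)$. This holds since $x$ is fresh for both $s$ and $t$, the latter by \cref{lem:sensibility}.
\end{itemize}
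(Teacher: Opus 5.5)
Your proof is correct and uses the same two arguments as the paper. The first is the $\eta$-part of $s \gtE t$ combined with the head-preserving substitution $\{x \mapsto u\}$ and \cref{lem:head-preservation}. The second is \AMPO{3} on $\APP\,s\,u$, followed by either \AMPO{4} or \cref{lem:reification} with transitivity, and then \AMPO{5}.

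The only difference is the case split: you split on whether $s\,u$ is full, while the paper splits on whether $s$ is partial. As a result, a partial $s$ with $s\,u$ still non-full goes through your reification branch rather than the $\eta$ branch. That works equally well and shows the $\eta$-part is only needed when $s\,u$ is full.
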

\begin{proof}
Suppose $s \gtE t$ and $s = f\,\aseq{s}$ with $f \in \Sigma \cup \cV$.  
\begin{itemize}
\item
If $s$ is partial then $s \gtE t$ yields 
$s\,x \gtB t\,x$ for some variable $x$ not in $\Var(s)$.  According
to \cref{lem:sensibility}, the variable $x$ is neither in $\Var(t)$. So the
substitution $\sigma = \{ x \mapsto u \}$ is head-preserving for $s\,x$ and
$t\,x$.  Therefore, $s\,u = (s\,x)\sigma \gtB (t\,x)\sigma = t\,u$
is obtained by \cref{lem:head-preservation}.
\item
If $s$ is not partial then $s\,u$ is reifiable and not full.  By \AMPO{1}
we can derive $\APP\,s\,u \gtB t$ from $s \gtB t$ and
$\APP\,s\,u \gtB u$.  Moreover, $\{s,u\} \gtB^\mul \{t,u\}$
holds.  Thus, $\APP\,s\,u \gtB \APP\,t\,u$ follows by \AMPO{3}.
If $t\,u$ is not full,  the claim follows by \AMPO{4} and \AMPO{5}.
Otherwise, according to \cref{lem:reification},
$\APP\,t\,u \gtB t\,u$ holds,
so $\APP\,s\,u \gtB t\,u$ follows by transitivity
(\cref{lem:transitivity}).  Therefore, $s\,u \gtB t\,u$ is derived
by \AMPO{5}.
\qedhere
\end{itemize}
\end{proof}

\begin{lemma}
\label{lem:left}
Let $u$ be a term.  If $s \gtB t$ then $u\,s \gtB u\,t$.
\end{lemma}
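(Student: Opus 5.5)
The plan is a direct case analysis on whether $u\,s$ is full, with no induction needed. Write $u = f\,\aseq{u}$ with $f \in \Sigma \cup \cV$, so that $u\,s = f\,u_1 \cdots u_n\,s$ and $u\,t = f\,u_1 \cdots u_n\,t$. The key observation is that these two terms share the same head $f$ and the same number $n+1$ of arguments. Since fullness depends only on the head symbol and the number of arguments, $u\,s$ is full if and only if $u\,t$ is full. Both terms are reifiable, because each has at least one argument.

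\emph{Case 1: $u\,s$ is full,} i.e., $f \in \Sigma$ and $\alpha(f) = n+1$. Then $u\,t$ is full with the same head, and I apply \AMPO{3}. We have $f \succsim f$ by reflexivity of the precedence. For the argument comparison, $\{\seq{u}, s\} \gtB^\mul \{\seq{u}, t\}$ holds because we replace the single element $s$ by $t$ and $s \gtB t$ is given. This case also covers the instance $f = \APP$ with $n = 1$.

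\emph{Case 2: $u\,s$ is not full.} This includes the case where $f$ is a variable. Here the reifications are $\reify{u\,s} = \APP\,u\,s$ and $\reify{u\,t} = \APP\,u\,t$. Both are full, since $\alpha(\APP) = 2$. I proceed in three steps.
\begin{itemize}
\item First, \AMPO{3} yields $\APP\,u\,s \gtB \APP\,u\,t$, using $\APP \succsim \APP$ and $\{u,s\} \gtB^\mul \{u,t\}$, which again follows from $s \gtB t$.
\item Next, $u\,t$ is reifiable and not full, and $\APP\,u\,s$ is full, so \AMPO{4} gives $\APP\,u\,s \gtB u\,t$.
\item Finally, $u\,s$ is reifiable and not full, so \AMPO{5} gives $u\,s \gtB u\,t$.
\end{itemize}

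Unlike \cref{lem:right}, this argument does not need $\APP$ to be maximal in the precedence, nor does it use transitivity. The only point requiring care is the observation that $u\,s$ and $u\,t$ agree on fullness, since the heads and argument counts coincide. Once that is settled, the multiset step is immediate, so I do not expect a genuine obstacle.
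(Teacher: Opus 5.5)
Your proof is correct and follows the paper's argument: the same split on whether $u\,s$ is full, using \AMPO{3} directly in the full case, and otherwise \AMPO{3} on $\APP\,u\,s$ versus $\APP\,u\,t$ followed by \AMPO{4} and \AMPO{5}. The paper also records some \AMPO{1} facts, such as $\APP\,u\,s \gtB u$, that \AMPO{3} as defined does not require, so your shorter version loses nothing.
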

\begin{proof}
Suppose $s \gtB t$ and $u = f\,\aseq{u}$ with $f \in \Sigma \cup \cV$.  
\begin{itemize}
\item
If $u\,s$ is full then by \AMPO{1} we can derive $u\,s \gtB t$ and
$u\,s \gtB u_i$ for all $i$.  Moreover,
$\{\seq{u},s\} \gtB^\mul \{\seq{u},t\}$ holds.  Thus,
$u\,s \gtB u\,t$ follows by \AMPO{3}.
\item
Otherwise, $u\,s$ and $u\,t$ are reifiable and not full.  By \AMPO{1} we can
derive $\APP\,u\,s \gtB t$ from $s \gtB t$ and
$\APP\,u\,s \gtB u$.  Moreover, $\{u,s\} \gtB^\mul \{u,t\}$
holds.  Thus, $\APP\,u\,s \gtB \APP\,u\,t$ follows by \AMPO{3}.
Therefore, $u\,s \gtB u\,t$ is obtained by \AMPO{4} and \AMPO{5}.
\qedhere
\end{itemize}
\end{proof}

As a consequence of \cref{lem:left}, if $s \gtE t$ (with $\eta$) then
$u\,s \gtB u\,t$.

\begin{lemma}
\label{lem:contexts}
Suppose $\APP$ is maximum in the precedence.
Let $C$ be a context.  If $s \gtE t$ then $C[s] \gtB C[t]$.
\end{lemma}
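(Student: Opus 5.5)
We argue by induction on the structure of the context $C$. In an applicative signature every context is either the hole $\square$, or of the form $C'\,u$, or of the form $u\,C'$, where $C'$ is a smaller context and $u$ is a term. Together with the induction hypothesis, the two lemmas just proved cover the two non-trivial cases. Since $\gtB$ alone does not suffice for application on the right, we must carry the stronger relation $\gtE$ through the induction. We therefore strengthen the statement and prove that $s \gtE t$ implies $C[s] \gtE C[t]$; the desired $C[s] \gtB C[t]$ then follows, because $\gtE \subseteq \gtB$ by \cref{def:eta}.

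The base case $C = \square$ is trivial.

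For $C = u\,C'$, the induction hypothesis gives $C'[s] \gtE C'[t]$, hence $C'[s] \gtB C'[t]$. \Cref{lem:left} then yields $u\,C'[s] \gtB u\,C'[t]$. To upgrade this to $\gtE$, we must handle the case that $u\,C'[s]$ is partial: we then also need $u\,C'[s]\,x_{n+1}\cdots x_k \gtB u\,C'[t]\,x_{n+1}\cdots x_k$ for fresh variables up to the arity of the head. Here \cref{lem:left} alone is not enough. Instead, such a term is a partial application $f\,\aseq{u}\,C'[s]\,\vec x$ with $f \in \Sigma$, whose fully applied form is full. For full terms, \AMPO{3} compares argument multisets directly: in the multiset $\{\seq{u}, C'[s], \vec x\}$ we replace $C'[s]$ by $C'[t]$. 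For the non-full prefixes we reify, arguing as in \cref{lem:left} via \AMPO{4} and \AMPO{5}. The cleanest way to organize this is to prove a variant of \cref{lem:left} that applies to any argument position, $u\,s\,\vec v \gtB u\,t\,\vec v$, and to derive the $\eta$-clauses from it.

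For $C = C'\,u$, the induction hypothesis gives $C'[s] \gtE C'[t]$, and \cref{lem:right} (which uses maximality of $\APP$) yields $C'[s]\,u \gtB C'[t]\,u$. If $C'[s]\,u$ is partial, then $C'[s]$ is partial too. By \cref{def:eta}, $C'[s] \gtE C'[t]$ then already contains $C'[s]\,y \gtE C'[t]\,y$ for a fresh $y$. \Cref{lem:head-preservation-eta} with $\{y \mapsto u\}$, which is head-preserving because $y$ occurs only in argument position, gives $C'[s]\,u \gtE C'[t]\,u$ directly.

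The main obstacle is the left case $u\,C'$ when $u\,C'[s]$ is partial, since \cref{lem:left} is stated only for $\gtB$ and only for the last argument. I would first prove the generalized monotonicity in an arbitrary argument position, by induction on the number of trailing arguments, using \AMPO{3} when the term is full and \AMPO{4}/\AMPO{5} together with \cref{lem:transitivity} and \cref{lem:reification} otherwise. If this becomes too messy, the fallback is to keep the induction for plain $\gtB$ and to treat the application $u\,(\cdot)$ at an inner position as a right-extension of the pair $u\,C'[s]$, $u\,C'[t]$.
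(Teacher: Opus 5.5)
Your proof is correct, but it is organized differently from the paper's. The paper keeps the conclusion at $\gtB$ and calls the lemma an easy induction on $C$ using only \cref{lem:left,lem:right}. Only afterwards does it derive closure of $\gtE$ under contexts (\cref{thm:contexts}), by a separate induction on $\|C[s]\|_\eta$ that applies this lemma to the context $C\,x$.

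You instead prove the stronger statement $C[s] \gtE C[t]$ directly. This costs two extra ingredients. First, \cref{lem:head-preservation-eta} with $\{y \mapsto u\}$ gives the $\eta$-clauses after an application on the right. Second, a version of \cref{lem:left} with trailing arguments, $u\,s\,\vec v \gtB u\,t\,\vec v$, gives the $\eta$-clauses after an application on the left.

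What this buys is an induction that actually closes. If the hypothesis only yields $C'[s] \gtB C'[t]$, then in the case $C = C'\,u$ with $C' \neq \square$ (already $C = \square\,u_1\,u_2$), \cref{lem:right} does not apply, because it needs $C'[s] \gtE C'[t]$. The paper's one-line proof leaves this point implicit. Your version also yields \cref{thm:contexts} for free.

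Two small remarks. In the right case you should invoke \cref{lem:sensibility} to get $y \notin \Var(C'[t])$, so that $\{y \mapsto u\}$ is head-preserving for $C'[t]\,y$ and sends it to $C'[t]\,u$. Also, your generalized left lemma needs neither \cref{lem:reification}, \cref{lem:transitivity}, nor maximality of $\APP$. The terms $u\,s\,\vec v$ and $u\,t\,\vec v$ have the same head and the same number of arguments, so they are full simultaneously. Hence each non-full step closes by \AMPO{3}, \AMPO{4}, and \AMPO{5} alone, and the full step by \AMPO{3} on the argument multisets.
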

\begin{proof}
Easy induction on $C$ with using \cref{lem:left,lem:right}.
\end{proof}

Now we are ready to show that $\gtE$ is closed under contexts.

\begin{theorem}
\label{thm:contexts}
Suppose $\APP$ is maximum in the precedence and $C$ is a context.  
If $s \gtE t$ then $C[s] \gtE C[t]$.
\end{theorem}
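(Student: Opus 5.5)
By \cref{lem:contexts} we already have $C[s] \gtB C[t]$. So by \cref{def:eta}, what remains is the extra condition for partial left-hand sides: whenever $C[s]$ is partial, we must also show $C[s]\,x \gtE C[t]\,x$ for a fresh variable $x \notin \Var(C[s])$.

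The plan is to prove the statement by induction on $\|C[s]\|_\eta$, simultaneously for all contexts $C$. Fix $C$ with $C[s]$ partial and a fresh $x$. Consider the context $C' = C\,x$, the context $C$ applied to $x$. Then
\[
C[s]\,x = C'[s], \qquad C[t]\,x = C'[t].
\]
Moreover $\|C'[s]\|_\eta < \|C[s]\|_\eta$, because $C[s]$ is partial and the remark after the definition of $\|\cdot\|_\eta$ gives $\|u\|_\eta > \|u\,x\|_\eta$ for partial $u$. The induction hypothesis applied to $C'$ therefore yields $C'[s] \gtE C'[t]$, that is, $C[s]\,x \gtE C[t]\,x$. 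Together with $C[s] \gtB C[t]$ this gives $C[s] \gtE C[t]$.

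The base case is the situation where $C[s]$ is not partial, and there the conclusion $C[s] \gtE C[t]$ is just $C[s] \gtB C[t]$, which \cref{lem:contexts} supplies. Since the induction ranges over all contexts at once, the hypothesis $s \gtE t$ is used only through \cref{lem:contexts}. One subtlety is that the hole context $C = \square$ is trivial, but for $C = \square$ and $s$ partial the step produces $C' = \square\,x$, and this case is still covered by \cref{lem:contexts}, which holds for every context.

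The main obstacle is the freshness bookkeeping. \cref{def:eta} uses a fresh $x \notin \Var(C[s])$, and we must make sure the same $x$ works on both sides. By \cref{lem:sensibility} applied to $C[s] \gtB C[t]$, we get $x \notin \Var(C[t])$ as well. Choosing any such $x$ is harmless, because \cref{lem:head-preservation-eta}, applied with a variable-renaming substitution that is head-preserving, shows the choice of fresh variable does not matter. Beyond this, the argument is a straightforward well-founded induction.
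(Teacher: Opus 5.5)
Your proof is correct and is essentially the paper's own argument: induction on $\|C[s]\|_\eta$ over all contexts at once, with \cref{lem:contexts} supplying $C[s] \gtB C[t]$, and the induction hypothesis applied to $C' = C\,x$ when $C[s]$ is partial. The freshness bookkeeping you single out is not a real obstacle, because the induction hypothesis covers the context $C\,x$ for whichever fresh $x$ you pick, so your appeals to \cref{lem:sensibility} and \cref{lem:head-preservation-eta} are harmless but unnecessary.
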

\begin{proof}
Suppose $s \gtE t$.  We show $C[s] \gtE C[t]$
by induction on $\|C[s]\|_\eta$.  Suppose $s \gtB t$.  By
\cref{lem:contexts} we have $C[s] \gtB C[t]$.  When $C[s]$ is
partial, we have to show $C'[s] \gtB C'[t]$ for 
$C' = C\,x$,
where $x$ is a variable not in $\Var(C[s])$.  As
$\|C[s]\|_\eta > \|C'[s]\|_\eta$, the induction hypothesis applies.
\end{proof}

Using \cref{thm:contexts}, we can prove that the transitivity of $\gtB$
(\cref{lem:transitivity}) carries over to $\gtE$.

\begin{lemma}
\label{lem:eta-transitivity}
The relation $\gtE$ is transitive, provided that $\APP$ is maximum in
the precedence.
\end{lemma}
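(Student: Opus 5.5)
Suppose $s \gtE t \gtE u$; we show $s \gtE u$ by induction on $\|s\|_\eta$, following the recursive structure of \cref{def:eta}. Since $\gtE \subseteq \gtB$, we have $s \gtB t \gtB u$, and \cref{lem:transitivity} gives $s \gtB u$. If $s$ is not partial, this already means $s \gtE u$ and we are done.

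Now assume $s$ is partial. By \cref{def:eta} we must show $s\,x \gtE u\,x$ for a fresh variable $x \notin \Var(s)$. By \cref{lem:sensibility}, $x$ also does not occur in $t$ or $u$. From $s \gtE t$ and partiality of $s$, we get $s\,y \gtE t\,y$ for some fresh $y$. Renaming via \cref{lem:head-preservation-eta} with $\{y \mapsto x\}$, which is head-preserving since $y$ occurs only as an argument, gives $s\,x \gtE t\,x$. For the second step we need $t\,x \gtE u\,x$, and here $t$ need not be partial. We therefore obtain it from \cref{thm:contexts} with the context $\square\,x$, applied to $t \gtE u$; this is where maximality of $\APP$ is used. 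We now have $s\,x \gtE t\,x \gtE u\,x$, and since $\|s\,x\|_\eta < \|s\|_\eta$, the induction hypothesis yields $s\,x \gtE u\,x$. Together with $s \gtB u$, this gives $s \gtE u$.

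The main obstacle is the mismatch between the partiality requirements on the two premises. The definition only supplies the extended comparison $t\,x \gtE u\,x$ when $t$ is partial, while we need it regardless. \cref{thm:contexts} resolves this in one step. A secondary point is that the induction must be on $s$ alone, with $t$ and $u$ arbitrary, so that the hypothesis applies to the extended triple $(s\,x, t\,x, u\,x)$. A minor check is that fresh variables can be chosen uniformly, which \cref{lem:sensibility} and \cref{lem:head-preservation-eta} ensure.
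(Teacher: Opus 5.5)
Your proof is correct and follows the paper's argument: induction on $\|s\|_\eta$, $s \gtB u$ from \cref{lem:transitivity}, and, for partial $s$, the chain $s\,x \gtE t\,x \gtE u\,x$ closed off by the induction hypothesis. The only difference is that the paper gets both links of that chain directly from \cref{thm:contexts} with the context $\square\,x$, whereas you obtain the first link from the definition plus a renaming via \cref{lem:head-preservation-eta}, which is equally valid.
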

\begin{proof}
Suppose $s \gtE t \gtE u$.  We prove $s \gtE u$ by induction on
$\|s\|_\eta$. The transitivity of $\gtB$ (\cref{lem:transitivity}) yields
$s \gtB u$.  When $s$ is partial, we also have to show $s\,x \gtE u\,x$ for some
$x \in \cV \setminus \Var(s)$.  By \cref{thm:contexts} the relations
$s\,x \gtE t\,x \gtE u\,x$ follow from the assumption. As we have
$\|s\|_\eta > \|s\,x\|_\eta$, the induction hypothesis yields
$s\,x \gtE u\,x$.
\end{proof}

The well-foundedness of
$\gtB$ (\cref{lem:well-foundedness}) also carries over to $\gtE$ as $\gtE$
is a subrelation of $\gtB$.  

\begin{lemma}
\label{lem:eta-wfo}
The relation $\gtE$ is a well-founded order if $\succ$ is
well-founded.
\end{lemma}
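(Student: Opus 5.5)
Three properties make up the claim: irreflexivity, transitivity, and well-foundedness. Each will be obtained from what is already established for $\gtB$.

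\emph{Well-foundedness.} By \cref{def:eta}, $s \gtE t$ always requires $s \gtB t$, so ${\gtE} \subseteq {\gtB}$. Any infinite descending chain $t_0 \gtE t_1 \gtE \cdots$ would then be an infinite descending chain for $\gtB$. This contradicts \cref{lem:well-foundedness}, which applies because $\succ$ is assumed well-founded. Irreflexivity follows the same way: $s \gtE s$ would give $s \gtB s$, and a single loop already yields the infinite chain $s \gtB s \gtB \cdots$.

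\emph{Transitivity.} Here I would invoke \cref{lem:eta-transitivity}. That lemma, however, assumes that $\APP$ is maximum in the precedence, and \cref{lem:eta-wfo} as stated does not. So the real task is to prove transitivity of $\gtE$ directly, without the maximality assumption. The approach is induction on $\|s\|_\eta$, mirroring the proof of \cref{lem:eta-transitivity} but without \cref{thm:contexts}.

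Suppose $s \gtE t \gtE u$. \cref{lem:transitivity} gives $s \gtB u$, which settles the case where $s$ is not partial. If $s$ is partial, we need $s\,x \gtE u\,x$ for some fresh $x$.
\begin{enumerate}[(i)]
\item From $s \gtE t$ we get $s\,y \gtE t\,y$ for some fresh $y$, and by \cref{lem:head-preservation-eta} we may rename $y$ to any $x$ fresh for $s$, $t$ and $u$.
\item We then need $t\,x \gtE u\,x$ from $t \gtE u$. This is immediate when $t$ is partial, again after renaming via \cref{lem:head-preservation-eta}.
\item The main obstacle is the case where $t$ is not partial. In that case $t \gtE u$ only tells us $t \gtB u$. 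Obtaining $t\,x \gtB u\,x$ would then go through \cref{lem:right}, and that lemma uses maximality of $\APP$ (via \cref{lem:reification}) precisely when $u\,x$ is full.
\end{enumerate}
I would first try to avoid the obstacle as follows. Since $s$ is partial and $s\,x \gtB t\,x$, I would examine how this derivation can arise and compose it with $t \gtB u$ to get $s\,x \gtB u\,x$, then argue through the same case analysis as in \cref{lem:transitivity}.

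If this turns out to be delicate, the fallback is to read the lemma in the context of this section. Its standing hypothesis, announced at the start of the section, is that $\APP$ is maximum. Under that hypothesis, transitivity is exactly \cref{lem:eta-transitivity}, and the whole statement follows by combining it with the subrelation argument for well-foundedness above.
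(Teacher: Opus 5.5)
Your fallback is precisely the paper's proof. Well-foundedness, and with it irreflexivity, is inherited from \cref{lem:well-foundedness} because ${\gtE} \subseteq {\gtB}$. Transitivity is \cref{lem:eta-transitivity}. The maximality of $\APP$ that this needs is left implicit in the statement and made explicit in \cref{cor:mono-wfo}. You were right to spot the missing hypothesis, but drop your first route. It is not merely delicate; it cannot succeed. Without maximality, $\gtE$ is genuinely not transitive, and a counterexample sits exactly at the obstacle you isolated in (iii).

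Extend \cref{ex:monotonicity}(a) by a constant $\m{k}$ with $\alpha(\m{k}) = 2$, and use the well-founded precedence $\m{f} \sim \m{g} \succ \m{k} \succ \APP$. Put $s = \m{k}\,(\m{f}\,\m{g})$, $t = \m{f}\,\m{g}$ and $u = \m{g}\,\m{g}$.

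\emph{$t \gtE u$.} Since $t$ is full, $t \gtE u$ amounts to $t \gtB u$, which that example establishes.

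\emph{$s \gtE t$.} The term $s$ is partial, and $s \gtB t$ holds by \AMPO{1}. For the full term $s\,x$ we get $s\,x \gtB \APP\,(\m{f}\,\m{g})\,x$ by \AMPO{2}, using $\m{k} \succ \APP$. Hence $s\,x \gtB t\,x$ by \AMPO{4}, and so $s \gtE t$.

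\emph{$s \not\gtE u$.} Here $s \gtE u$ would require $s\,x \gtB \m{g}\,\m{g}\,x$, a comparison between two full terms. Every rule fails:
\begin{itemize}
\item \AMPO{2} and \AMPO{3} need $\m{k} \succsim \m{g}$, which is false.
\item \AMPO{4} and \AMPO{5} need one side to be non-full, but both are full.
\item \AMPO{1} fails because the variable $x$ has no $\gtB$-successors, and $\m{f}\,\m{g} \gtB^= \m{g}\,\m{g}\,x$ is excluded by \cref{lem:sensibility}.
\end{itemize}

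So the lemma holds only with $\APP$ maximum. State that hypothesis explicitly and give your fallback argument as the proof.
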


Hence, the following corollary is obtained.

\begin{corollary}
\label{cor:mono-wfo}
The relation $\gtE$ is a well-founded order that is closed under contexts,
if $\APP$ is maximum in the precedence and $\succ$ is well-founded.
\end{corollary}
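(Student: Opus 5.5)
The corollary combines results already established in this section, so the plan is simply to assemble them. We assume throughout that $\APP$ is maximum in the precedence and that $\succ$ is well-founded.

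First, $\gtE$ is a well-founded order by \cref{lem:eta-wfo}. That lemma itself rests on three facts, which we would recall explicitly for completeness.
\begin{enumerate}[(i)]
\item Irreflexivity and well-foundedness: by \cref{def:eta}, $s \gtE t$ always includes $s \gtB t$. Hence ${\gtE} \subseteq {\gtB}$, and both properties are inherited from $\gtB$ via \cref{thm:wfo}, which is available because $\succ$ is well-founded.
\item Transitivity: this is \cref{lem:eta-transitivity}. It is exactly where the maximality of $\APP$ enters, since its proof invokes \cref{thm:contexts} to pass from $s \gtE t \gtE u$ to $s\,x \gtE t\,x \gtE u\,x$.
\end{enumerate}
Together these make $\gtE$ a well-founded strict order.

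Second, closure under contexts is precisely \cref{thm:contexts}, which holds under the maximality hypothesis on $\APP$. Its proof goes by induction on $\|C[s]\|_\eta$. It first uses \cref{lem:contexts} to obtain $C[s] \gtB C[t]$, which in turn relies on \cref{lem:left,lem:right}. It then recurses on $C\,x$ whenever $C[s]$ is partial.

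Conjoining the two parts yields the corollary. The only point needing care is bookkeeping: each cited result must be used under the hypotheses it requires. Maximality of $\APP$ is needed for \cref{thm:contexts}, and hence for transitivity of $\gtE$. Well-foundedness of $\succ$ is needed for \cref{thm:wfo}. Both are assumed in the corollary, so no real obstacle remains; the substantive work was done in \cref{lem:right}, via \cref{lem:reification}.
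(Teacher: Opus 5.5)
Your proposal is correct and matches the paper: it obtains the corollary by combining \cref{lem:eta-wfo}, with transitivity supplied by \cref{lem:eta-transitivity}, and \cref{thm:contexts}. You also correctly observe that transitivity of $\gtE$ relies on the maximality of $\APP$, even though \cref{lem:eta-wfo} is stated without that hypothesis.
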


\section{Reduction Order}
\label{sec:stability}

The relation $\gtE$ is \emph{not} closed under substitutions.

\begin{example}[continued from \cref{ex:ampo}]
\label{ex:non-stable}
Although $x\,y \gtB \m{id}\,y$ holds,
$\m{id}\,y = (x\,y)\sigma \gtB (\m{id}\,y)\sigma = \m{id}\,y$ does not hold for
$\sigma = \{ x \mapsto \m{id} \}$.  
As $x\,y$ is not partial, we conclude that $\gtE$ is not closed under
substitutions and therefore not a reduction order.
\end{example}

We overcome the problem using an existing technique to construct a
reduction order from a 
well-founded order $>$ on
\emph{ground} terms that is closed under ground contexts.
We call a substitution $\sigma$ \emph{grounding} for a set $T$ of
terms if $t\sigma$ is ground for all $t \in T$.

\begin{definition}
\label{def:gtM}
Let $\alpha$ be an arity assignment and $\succsim$ a precedence.  The
relation $\gtM$ on terms is defined as follows: $s \gtM t$
if $s\sigma \gtE t\sigma$ for all grounding substitutions $\sigma$ for $\{ s, t \}$.
\end{definition}

The relation $\gtM$ is our applicative multiset path order (AMPO).

\begin{theorem}
\label{thm:ampo}
Assume $\APP$ is maximum in the precedence.
The relation $\gtM$ is a reduction order if $\succ$ is well-founded.
\end{theorem}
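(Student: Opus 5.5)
We verify the defining properties of a reduction order for $\gtM$ directly from \cref{def:gtM}: irreflexivity, transitivity, well-foundedness, closure under substitutions, and closure under contexts. Each one is reduced to the corresponding property of $\gtE$ on ground terms, using \cref{cor:mono-wfo}. One preliminary remark is used throughout. Ground terms exist because $\APP \in \Sigma$. Hence for any set of terms $T$ we can build a grounding substitution by sending every variable to $\APP$. More generally, any substitution can be extended to a grounding one by sending the remaining variables to $\APP$.

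\textbf{Irreflexivity.} Suppose $s \gtM s$. Take $\sigma$ mapping every variable to $\APP$. Then $s\sigma \gtE s\sigma$, which contradicts the irreflexivity of the order $\gtE$.

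\textbf{Transitivity.} Let $s \gtM t \gtM u$, and let $\sigma$ be grounding for $\{s,u\}$. Define $\sigma'$ to agree with $\sigma$ on $\Var(s)\cup\Var(u)$ and to send all other variables to $\APP$. Then $\sigma'$ is grounding for $\{s,t,u\}$, with $s\sigma' = s\sigma$ and $u\sigma' = u\sigma$. We obtain $s\sigma \gtE t\sigma' \gtE u\sigma$, and \cref{lem:eta-transitivity} gives $s\sigma \gtE u\sigma$.

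\textbf{Well-foundedness.} Suppose $s_0 \gtM s_1 \gtM \cdots$ were an infinite chain. The single substitution $\sigma$ mapping all variables to $\APP$ is grounding for every $s_i$ simultaneously. This yields an infinite $\gtE$-chain $s_0\sigma \gtE s_1\sigma \gtE \cdots$, contradicting \cref{lem:eta-wfo}. This is the step that needs care: a grounding substitution must be chosen uniformly for the whole sequence, rather than pairwise as the definition would suggest.

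\textbf{Closure under substitutions.} Let $s \gtM t$ and let $\tau$ be any substitution. Take $\sigma$ grounding for $\{s\tau, t\tau\}$. Then the composed substitution $\tau\sigma$ (first $\tau$, then $\sigma$) is grounding for $\{s,t\}$, since $s(\tau\sigma) = (s\tau)\sigma$ is ground, and likewise for $t$. Hence $(s\tau)\sigma = s(\tau\sigma) \gtE t(\tau\sigma) = (t\tau)\sigma$, which is exactly $s\tau \gtM t\tau$.

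\textbf{Closure under contexts.} Let $s \gtM t$ and let $C$ be a context. Take $\sigma$ grounding for $\{C[s], C[t]\}$. Since $s\sigma$ and $t\sigma$ are subterms of these ground terms, they are ground, so $s\sigma \gtE t\sigma$. Moreover $C\sigma$ is a context, and $C[s]\sigma = (C\sigma)[s\sigma]$. \cref{thm:contexts}, which uses the maximality of $\APP$, then gives $(C\sigma)[s\sigma] \gtE (C\sigma)[t\sigma]$, that is, $C[s]\sigma \gtE C[t]\sigma$.

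No serious obstacle is expected, since all the real work is already contained in \cref{cor:mono-wfo}. The only points needing attention are the existence of ground terms, which is supplied by $\APP$, and the bookkeeping of extending or composing substitutions so that they become grounding. This bookkeeping matters in the transitivity case, where $t$ may contain variables not occurring in $s$ or $u$, and in the well-foundedness case.
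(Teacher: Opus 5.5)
Your proof is correct and is essentially the paper's argument. The paper just invokes the cited exercise of Baader and Nipkow (grounding a well-founded order that is closed under ground contexts yields a reduction order) together with \cref{cor:mono-wfo}; you have written out that exercise explicitly, using \cref{lem:eta-transitivity}, \cref{lem:eta-wfo}, \cref{thm:contexts}, and the existence of ground terms via $\APP$, which is exactly what the paper's footnote relies on.

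One small tidy-up: substitutions in this paper have finite domain, so ``$\sigma$ mapping every variable to $\APP$'' is not literally a substitution. Instead, for each $i$ use the grounding substitution sending $\Var(s_i)\cup\Var(s_{i+1})$ to $\APP$, and in the other cases use the variables of the terms at hand. Every such choice gives $s_i$ the same ground instance, so your chain still goes through. What really matters is that consecutive ground instances agree, not that a single substitution grounds the whole sequence.
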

\begin{proof}
According to \cite[Exercise~5.6]{BN98}, the relation defined as
in \cref{def:gtM} forms a reduction order if the underlying relation is
a well-founded order that is closed under ground contexts.
Thus, \cref{cor:mono-wfo} entails the
claim.\footnote{%
Here we also use the assumption $\APP \in \Sigma$ to ensure existence of 
a ground term.}
\end{proof}

We need a method to check if $s \gtM t$ holds for given terms $s$ and $t$.
We present a sufficient condition for $s \gtM t$ based on absence of
head variables in $s$.  A term $t$ is \emph{head variable free}
if $\HV(t) = \emptyset$.  For example, the left-hand side of the rule 
$\m{map}~f~(\m{cons}~x~\mi{xs}) \to \m{cons}~(f~x)~(\m{map}~f~\mi{xs})$
is head variable free, while the right-hand side is not because of the subterm $f~x$.

Below, we assume that $\APP$ is maximum in the precedence.

\begin{lemma}
\label{lem:hvf}
Let $\sigma$ be a substitution.
If $s \gtB t$ and $s$ is head variable free then
$s\sigma \gtB t\sigma$.
\end{lemma}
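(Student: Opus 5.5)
The plan is to argue by induction on the derivation of $s \gtB t$ (equivalently, on the inductive definition in \cref{def:ampo}), keeping the invariant that the left-hand side of every comparison we apply the induction hypothesis to is head variable free. The key point is that $\sigma$ cannot change the head of $s$ or of any subterm of $s$ in a way that affects fullness, since no variable of $s$ occurs in head position. So $s = f\,\aseq[m]{s}$ with $f \in \Sigma$; the case $s \in \cV$ is impossible because a variable is neither full nor reifiable and has no arguments. Moreover $s\sigma = f\,(s_1\sigma)\cdots(s_m\sigma)$ is full iff $s$ is full. Every $s_i$ is head variable free. If $s$ is reifiable, then $\reify{s}$ is head variable free and $\reify{(s\sigma)} = \reify{s}\sigma$. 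The right-hand side $t$ is unrestricted, and this is where the care is needed.

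Cases \AMPO{1}, \AMPO{2}, \AMPO{3} and \AMPO{5} go as follows.
\begin{itemize}
\item \AMPO{1}: from $s_i \gtB^= t$, the induction hypothesis (or the trivial equality case) gives $s_i\sigma \gtB^= t\sigma$, and \AMPO{1} yields $s\sigma \gtB t\sigma$.
\item \AMPO{2} and \AMPO{3}: here $t = g\,\aseq{t}$ is full with $g \in \Sigma$, so $t\sigma = g\,(t_1\sigma)\cdots(t_n\sigma)$ is still full with the same head. The premises $s \gtB t_j$ transfer directly by the induction hypothesis. For the multiset comparison we observe that $\{\seq[m]{s}\} \gtB^\mul \{\seq{t}\}$ is witnessed elementwise by comparisons $s_i \gtB t_j$, and each $s_i$ is head variable free. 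The induction hypothesis therefore lifts these to $\{s_1\sigma,\ldots,s_m\sigma\} \gtB^\mul \{t_1\sigma,\ldots,t_n\sigma\}$, and the precedence conditions are unchanged.
\item \AMPO{5}: $s$ is reifiable and not full, so the same holds for $s\sigma$. From $\reify{s} \gtB t$ the induction hypothesis gives $\reify{s}\sigma = \reify{(s\sigma)} \gtB t\sigma$, and \AMPO{5} concludes.
\end{itemize}

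The main obstacle is case \AMPO{4}. Here $s$ is full, $t = h\,\aseq{t}$ with $n>0$ is reifiable and not full, and $s \gtB \reify{t}$. The induction hypothesis yields $s\sigma \gtB \reify{t}\sigma$. If $h \in \Sigma$, then $t\sigma$ is again reifiable and not full with $\reify{(t\sigma)} = \reify{t}\sigma$, and \AMPO{4} applies. If $h$ is a variable, $\sigma(h)$ may change the fullness status of $t\sigma$. We still have $\reify{t}\sigma = \APP\,(\sigma(h)\,(t_1\sigma)\cdots(t_{n-1}\sigma))\,(t_n\sigma) = \APP\,u\,v$ with $u\,v = t\sigma$.
\begin{itemize}
\item If $t\sigma$ is not full, it is reifiable (it has at least one argument), and its reification is exactly $\APP\,u\,v$. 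Hence \AMPO{4} gives $s\sigma \gtB t\sigma$.
\item If $t\sigma = u\,v$ is full, \cref{lem:reification} (which uses maximality of $\APP$) gives $\APP\,u\,v \gtB u\,v$. Transitivity (\cref{lem:transitivity}) then yields $s\sigma \gtB t\sigma$.
\end{itemize}
This completes the induction.
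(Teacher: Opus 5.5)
Your proposal is correct and follows the paper's proof essentially step for step. Both argue by induction on the derivation of $s \gtB t$, and both handle the only nontrivial case, \AMPO{4}, by splitting on whether $t\sigma$ is full; in the full case you combine $\reify{t}\sigma = \APP\,u\,v$ with \cref{lem:reification} and transitivity, exactly as the paper does, and your extra care (ruling out a bare variable $s$, and checking $\reify{(t\sigma)} = \reify{t}\sigma$) only makes explicit what the paper leaves implicit.
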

\begin{proof}
Suppose $s \gtB t$ and $s$ is head variable free.  
We show $s\sigma \gtB t\sigma$ by
induction on the derivation of $s \gtB t$.  Since $s$ is head
variable free, $s$ can be written as $f\,\aseq[m]{s}$ with $f \in \Sigma$.  
So $s\sigma = f\,(s_1\sigma)\cdots(s_m\sigma)$.
\begin{enumerate}[1.]
\item
If $s \gtB t$ is derived by \AMPO{1} then $s_i \gtB^= t$
for some $i$.  Since $s$ is head variable free, the subterm $s_i$ is head
variable free too.  By the induction hypothesis or stability of $=$ we
obtain $s_i\sigma \gtB t\sigma$.  Therefore,
$s\sigma \gtB t\sigma$ is derived by~\AMPO{1}.
\item
If $s \gtB t$ is derived by \AMPO{2} then $s\sigma \gtB t\sigma$ 
follows from the induction hypothesis and \AMPO{2}.
\item
If $s \gtB t$ is derived by \AMPO{3} then $s\sigma \gtB t\sigma$
follows from the induction hypothesis and \AMPO{3}.
\item
If $s \gtB t$ is derived by \AMPO{4} then $s$ is full, $t$ is 
reifiable and not full, and $s \gtB \reify{t}$.
If $t\sigma$ is full then the inequalities
$s\sigma \gtB \reify{t}\sigma \gtB t\sigma$
by the induction hypothesis and \cref{lem:reification}.  Thus,
$s\sigma \gtB t\sigma$ follows by transitivity
(\cref{lem:transitivity}).  Otherwise, the induction hypothesis yields
$s\sigma \gtB \reify{t}\sigma$.
Thus, $s\sigma \gtB t\sigma$ is derived by \AMPO{4}.
\item
If $s \gtB t$ is derived by \AMPO{5} then $s$ is reifiable and not full, and $\reify{s} \gtB t$.
Because $s$ is
head variable free, $\reify{s}$ is also head variable free.  So the induction
hypothesis yields $\reify{s}\sigma \gtB t\sigma$.
Therefore, $s\sigma \gtB t\sigma$ is derived by \AMPO{5}.
\qedhere
\end{enumerate}
\end{proof}

\begin{lemma}
\label{lem:hvf-eta}
Let $\sigma$ be a substitution.
If $s \gtE t$ and $s$ is head variable free then $s\sigma \gtE t\sigma$.
\end{lemma}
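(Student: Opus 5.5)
We argue by induction on $\|s\sigma\|_\eta$, mirroring the proof of \cref{lem:head-preservation-eta}, with \cref{lem:hvf} in place of \cref{lem:head-preservation}. Suppose $s \gtE t$ with $s$ head variable free. Then $s \gtB t$, and \cref{lem:hvf} gives $s\sigma \gtB t\sigma$. If $s\sigma$ is not partial, this already yields $s\sigma \gtE t\sigma$.

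So assume $s\sigma$ is partial. We must show $(s\sigma)\,x \gtE (t\sigma)\,x$ for some variable $x \notin \Var(s\sigma)$. Since $s$ is head variable free, it has the form $f\,\aseq[m]{s}$ with $f \in \Sigma$. Hence $s\sigma = f\,(s_1\sigma)\cdots(s_m\sigma)$ has the same head and the same number of arguments. Partiality of $s\sigma$ therefore means $\alpha(f) > m$, so $s$ itself is partial and $\|s\|_\eta = \|s\sigma\|_\eta$.

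From $s \gtE t$ and the partiality of $s$ we obtain $s\,y \gtE t\,y$ for some $y \notin \Var(s)$, and by \cref{lem:sensibility} also $y \notin \Var(t)$. Choose $x$ fresh for $s\sigma$ and $t\sigma$. Define $\tau$ by $\tau(y) = x$ and $\tau(z) = \sigma(z)$ for $z \neq y$. Then $(s\,y)\tau = (s\sigma)\,x$ and $(t\,y)\tau = (t\sigma)\,x$. Moreover $s\,y$ is still head variable free, because its head is $f$, its arguments are head variable free, and $y$ occurs only as an argument. Since $\|(s\,y)\tau\|_\eta = \|s\sigma\|_\eta - 1$, the induction hypothesis applied to $s\,y \gtE t\,y$ with $\tau$ yields $(s\sigma)\,x \gtE (t\sigma)\,x$. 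This completes the proof.

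The main care point is choosing the induction measure. The definition of $\gtE$ is recursive on the partiality of the left-hand side, so the measure must decrease when passing from $s$ to $s\,y$. Either $\|s\|_\eta$ or $\|s\sigma\|_\eta$ works, because the two coincide for head variable free $s$.

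A second detail: the definition of $\gtE$ is phrased with "a fresh variable". Any choice of fresh $x$ is acceptable here, and by \cref{lem:head-preservation-eta} (renaming is head-preserving) the choice does not matter.
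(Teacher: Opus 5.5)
Your proof is correct and follows the paper's argument. The paper inducts on $\|s\|_\eta$ (equal to your $\|s\sigma\|_\eta$ here), uses \cref{lem:hvf} for the $\gtB$ part, and in the partial case applies the induction hypothesis to $s\,y \gtE t\,y$ with $\sigma$ modified at the fresh variable, using \cref{lem:sensibility}. Your renaming $\tau(y) = x$ with $x$ fresh for $s\sigma$ (as in \cref{lem:head-preservation-eta}) is slightly more careful than the paper's $\tau(x) = x$, which does not by itself guarantee $x \notin \Var(s\sigma)$.
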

\begin{proof}
Suppose $s \gtE t$ and $s$ is head variable free.  By induction on
$\|s\|_\eta$ we show $s\sigma \gtE t\sigma$.
As $s \gtE t$, by definition $s \gtB t$ holds. Thus, $s\sigma \gtB t\sigma$
follows by \cref{lem:hvf}.
If $s\sigma$ is not partial then we are done.
When $s\sigma$ is partial, we have to show
$(s\sigma)\,x \gtE (t\sigma)\,x$.  Since $s$ is head variable free, $s$ is
of the form $f\,\aseq{s}$ with $f \in \Sigma$.  So
$s\sigma = f\,(s_1\sigma)\cdots(s_n\sigma)$.  As $s\sigma$ is partial, $s$
is partial too.
Thus, $sx \gtE tx$ follows from $s \gtE t$.  Here $x$ is a variable not in
$\Var(s)$, and thus $x$ is neither in $\Var(t)$ by \cref{lem:sensibility}.
Define the substitution $\tau$ as follows: $\tau(x) = x$ and
$\tau(y) = \sigma(y)$ for other variables $y$.  
Since $s\,x$ is head variable free and $\|s\,x\|_\eta > \|s\|_\eta$,
the induction hypothesis yields
$(s\sigma)\,x = (s\,x)\tau \gtE (t\,x)\tau = (t\sigma)\,x$.
Hence, $s\sigma \gtE t\sigma$ follows.
\end{proof}

\begin{theorem}
\label{thm:hvf}
Assume that $\APP$ is maximum in the precedence.
If $s \gtE t$ and $s$ is head variable free then $s \gtM t$.
\end{theorem}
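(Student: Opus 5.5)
This follows directly from \cref{def:gtM} and \cref{lem:hvf-eta}. By \cref{def:gtM}, to establish $s \gtM t$ we must show $s\sigma \gtE t\sigma$ for every substitution $\sigma$ that is grounding for $\{s,t\}$. So I fix an arbitrary such $\sigma$ and aim for exactly this inequality.

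The hypotheses are that $s \gtE t$ holds and that $s$ is head variable free. \cref{lem:hvf-eta} applies to an arbitrary substitution $\sigma$ under precisely these two assumptions, and it yields $s\sigma \gtE t\sigma$. The standing assumption that $\APP$ is maximum in the precedence, which \cref{lem:hvf-eta} relies on through \cref{lem:hvf} and \cref{lem:reification}, is part of the statement, so the lemma is available. Since $\sigma$ was an arbitrary grounding substitution, \cref{def:gtM} gives $s \gtM t$.

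There is no real obstacle here: the substantive work, namely stability under substitutions for head variable free left-hand sides including the $\eta$-part, is already done in \cref{lem:hvf,lem:hvf-eta}. The only point to check is that the groundness of $\sigma$ plays no role in \cref{lem:hvf-eta}. That lemma is stated for all substitutions, so restricting to grounding ones is harmless.
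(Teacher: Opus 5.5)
Your proposal is correct and is exactly the paper's argument. The paper proves \cref{thm:hvf} as an immediate consequence of \cref{lem:hvf-eta}, applied to each grounding substitution as \cref{def:gtM} requires, with the maximality of $\APP$ needed only to make that lemma available, just as you describe.
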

\begin{proof}
Immediate from \cref{lem:hvf-eta}.
\end{proof}

We relate Dershowitz's multiset path order and ours using currying.
Consider an MPO $>_\mpo$ and functional terms $s$ and $t$.  If we set
$\alpha(f) = n$ (by a slight abuse of notation) for every $n$-ary function
$f$, the curried terms $s^\circ$ and $t^\circ$ are full and head variable
free.  For the AMPO $>_\ampo$ induced by $\alpha$ and the corresponding
precedence, we show the \emph{gracefulness} property~\cite{BBWW17,BWW17}:

\begin{theorem}
\label{thm:gracefulness}
If $s >_\mpo t$ then $s^\circ \gtM t^\circ$.
\end{theorem}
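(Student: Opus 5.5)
By \cref{thm:hvf} it suffices to show $s^\circ \gtE t^\circ$. The term $s^\circ$ is head variable free, since every application node in a curried functional term has a constant head. Moreover, every curried term $f(\seq{t})^\circ = f\,\aseq{t^\circ}$ with $\alpha(f) = n$ is full, hence not partial. Variables are not partial either. So for curried terms, $\gtE$ coincides with $\gtB$, and the statement reduces to the following claim.

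\emph{Claim:} if $s >_\mpo t$ then $s^\circ \gtB t^\circ$.

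This is a simulation of MPO derivations by AMPO derivations, and I would prove it by induction on the derivation of $s >_\mpo t$ (equivalently, on $|s| + |t|$). Write $s = f(\seq[m]{s})$, so $s^\circ = f\,\aseq[m]{s^\circ}$, which is full.

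\begin{enumerate}[(i)]
\item Case \MPO{1}: we have $s_i >_\mpo^= t$. The induction hypothesis, or equality, gives $s_i^\circ \gtB^= t^\circ$, and \AMPO{1} then yields $s^\circ \gtB t^\circ$.
\item Case \MPO{2}: here $t = g(\seq{t})$, so $t^\circ = g\,\aseq{t^\circ}$ is full, and $f \succ g$. The induction hypothesis gives $s^\circ \gtB t_j^\circ$ for every $j$, so \AMPO{2} applies.
\item Case \MPO{3}: again both sides are full and $f \succsim g$. It remains to lift $\{\seq[m]{s}\} >_\mpo^\mul \{\seq{t}\}$ to the curried multisets. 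Take the witnesses $X$ and $Y$ of the multiset comparison and map them elementwise through $(\cdot)^\circ$. Each comparison $a >_\mpo b$ with $a \in X$ and $b \in Y$ is a subderivation, so the induction hypothesis gives $a^\circ \gtB b^\circ$. Hence $\{\seq[m]{s^\circ}\} \gtB^\mul \{\seq{t^\circ}\}$, and \AMPO{3} applies.
\end{enumerate}

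The only point requiring care is that currying is injective and commutes with the multiset decomposition $N = (M - X) \uplus Y$; this is immediate. Note that cases \AMPO{4} and \AMPO{5} are never needed, since every curried term is full.

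I expect the main obstacle to be justifying the reduction from $\gtM$ to $\gtB$ via \cref{thm:hvf}. This requires checking two things. First, that \cref{thm:hvf}'s standing assumption that $\APP$ is maximum is compatible with "the corresponding precedence": we extend the MPO precedence on $\cF$ by placing the fresh constant $\APP$ above everything, which affects no comparison among curried terms since $\APP$ never occurs in them. Second, that the $\eta$-clause is vacuous because no curried term is partial. Once these are settled, the induction above is routine.
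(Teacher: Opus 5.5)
Your proof is correct and follows the paper's argument: induction on the MPO derivation using only \AMPO{1}--\AMPO{3}, fullness of $s^\circ$ to pass from $\gtB$ to $\gtE$, and then closure under substitutions. The only difference is the final step. The paper uses that $s^\circ$ and $t^\circ$ are both head variable free, so every grounding substitution is head-preserving for both sides and \cref{lem:head-preservation} applies directly. That route needs no assumption on $\APP$, so your concern about extending the precedence disappears; your appeal to \cref{thm:hvf} is nonetheless valid under the standing maximality assumption.
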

\begin{proof}
The relation $s^\circ \gtB t^\circ$ is proved by straightforward
induction on the derivation of $s >_\mpo t$.  As $s^\circ$ is full,
$s^\circ \gtE t^\circ$ holds by definition. Consider an arbitrary
grounding substitution $\sigma$ for $\{s^\circ,t^\circ\}$.  Since $s^\circ$
is head variable free, $\sigma$ is head-preserving for $s^\circ$.  So
$s^\circ\sigma \gtE t^\circ\sigma$ follows by \cref{lem:head-preservation}.
Hence, $s^\circ \gtM t^\circ$.
\qed
\end{proof}

\section{Evaluation}
\label{sec:evaluation}

We evaluate our applicative multiset path order by examples and experimental data.
An ATRS is said to be \emph{left head variable free} if the left-hand side of every
rule is head variable free. The next termination criterion is an immediate
consequence of \cref{prop:compatibility,thm:ampo,thm:hvf}
and is decidable, see \cref{thm:decidability} in \cref{sec:appendix}.

\begin{corollary}
\label{cor:termination}
Suppose that 
a precedence $\succsim$ is well-founded and has $\APP$ as maximum.  
A left head variable free ATRS $\cR$ is
terminating if $\cR \subseteq {\gtE}$ for $\gtE$ induced from an arity assignment $\alpha$ and
$\succsim$.
\end{corollary}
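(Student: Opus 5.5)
The plan is to chain the three results the corollary is announced as a consequence of. Assume $\succsim$ is well-founded with $\APP$ as its maximum, and let $\cR$ be a left head variable free ATRS with $\cR \subseteq {\gtE}$.

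First, take an arbitrary rule $\ell \to r \in \cR$. By assumption $\ell \gtE r$. Since $\cR$ is left head variable free, $\ell$ is head variable free. Because $\APP$ is maximum, \cref{thm:hvf} gives $\ell \gtM r$, and therefore $\cR \subseteq {\gtM}$.

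Second, \cref{thm:ampo} applies, since $\APP$ is maximum and $\succ$ is well-founded. It shows that $\gtM$ is a reduction order for the given arity assignment $\alpha$ and precedence $\succsim$.

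Third, \cref{prop:compatibility}, in its ``if'' direction, applied with ${>} = {\gtM}$, yields that $\cR$ is terminating.

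The only point needing care is bookkeeping: $\gtE$ and $\gtM$ must be induced by the same $\alpha$ and $\succsim$, and the hypotheses of \cref{thm:hvf} and \cref{thm:ampo} must actually be met. Both theorems rely on the maximality of $\APP$, and only \cref{thm:ampo} additionally needs well-foundedness of $\succ$; both assumptions are present in the corollary. There is no real obstacle beyond this.
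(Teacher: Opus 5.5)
Your proposal is correct and follows the paper's own argument exactly: the paper states the corollary as an immediate consequence of \cref{prop:compatibility,thm:ampo,thm:hvf}, where \cref{thm:hvf} lifts each rule from $\gtE$ to $\gtM$ using left head variable freeness, \cref{thm:ampo} makes $\gtM$ a reduction order, and \cref{prop:compatibility} then gives termination. Your account of which hypothesis each theorem needs (maximality of $\APP$ for both, well-foundedness of $\succ$ only for \cref{thm:ampo}) is also accurate.
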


By examples, we demonstrate termination proving with our order.

\begin{example}
\label{ex:map}
First, consider the left head variable free ATRS
\begin{align*}
\m{append}~\m{nil}~\mi{ys} &\to \mi{ys}
&
\m{append}~(\m{cons}~x~\mi{xs})~\mi{ys}
&\to \m{cons}~x~(\m{append}~\mi{xs}~\mi{ys})
&
\m{id}~x &\to x
\\
\m{map}~f~\m{nil} &\to \m{nil}
&
\m{map}~f~(\m{cons}~x~\mi{xs})
&\to \m{cons}~(f~x)~(\m{map}~f~\mi{xs})
\\
\m{map}~\m{id} &\to \m{id}
&
\m{map}~f~(\m{append}~\mi{xs}~\mi{ys})
&\to \makebox[0mm][l]{$\m{append}~(\m{map}~f~\mi{xs})~(\m{map}~f~\mi{ys})$}
\end{align*}
which is a complete ATRS for the equational system in the introduction.
(Here, the equation of $\m{map}$ for $\m{cons}~x~\m{nil}$
is generalized to a rule for $\m{cons}~x~\mi{xs}$.)
The termination is shown by taking 
$\alpha(\APP) = \alpha(\m{cons}) = \alpha(\m{append}) = \alpha(\m{map}) = 2$, 
$\alpha(\m{id}) = 1$, and $\alpha(\m{nil}) = 0$ as well as
$\APP \sim \m{map} \succ \m{append} \succ \m{id} \succ \m{cons} \succ \m{nil}$.
Note that $\m{map}~\m{id} \gtE \m{id}$ is equivalent to
$\m{map}~\m{id} \gtB \m{id}$ and $\m{map}~\m{id}~x \gtB \m{id}~x$,
the latter of which uses $\m{map} \succ \m{id}$.
\end{example}

\begin{example}
\label{ex:filter}
Next, consider the ATRS for the filter function:
\begin{align*}
\m{filter}\,f\,\m{nil} &\to \m{nil}
&
\m{if}\,\m{true}\,x\,y &\to x
\\
\m{filter}\,f\,(\m{cons}\,x\,\mi{xs}) 
&\to \m{if}\,(f\,x)\,(\m{cons}\,x\,(\m{filter}\,f\,\mi{xs}))\,(\m{filter}\,f\,\mi{xs})
&
\m{if}\,\m{false}\,x\,y &\to y
\end{align*}
The termination is shown by taking $\alpha(\APP) = 2$, $\alpha(\m{if}) = 3$,
$\alpha(\m{cons}) = \alpha(\m{filter}) = 2$, and
$\alpha(\m{nil}) = \alpha(\m{true}) = \alpha(\m{false}) = 0$ as well as
$\APP \sim \m{filter} \succ \m{if} \succ \m{cons} \succ \m{nil}$.
\end{example}

Below is an example in which choice of arity assignment matters.

\begin{example}
\label{ex:zip-with}
Finally, consider the ATRS for the zip-with function:
\begin{align*}
\m{zip}~f~\m{nil}~\m{nil} &\to \m{nil}
\\
\m{zip}~f~(\m{cons}~x~\mi{xs})~(\m{cons}~y~\mi{ys})
&\to \m{cons}~(f~x~y)~(\m{zip}~f~\mi{xs}~\mi{ys})
\end{align*}
The termination is shown by
$\alpha(\APP) = \alpha(\m{cons}) = 2$, and
$\alpha(\m{nil}) = \alpha(\m{zip}) = 0$ as well as
$\APP \sim \m{zip} \succ \m{cons} \sim \m{nil}$.  Note that
the second rule cannot be ordered if we modify $\alpha(\m{zip}) = 0$ to
$\alpha(\m{zip}) = 3$. The reason is that with the modified assignment
$\m{zip}~f~(\m{cons}~x~\mi{xs})~(\m{cons}~y~\mi{ys}) \gtB f~x~y$ does not hold, 
as it requires the impossible multiset comparison
$\{ f, \m{cons}~x~\mi{xs}, \m{cons}~y~\mi{ys} \} \gtB^\mul \{ f~x, y \}$.
On the other hand, with $\alpha(\m{zip}) = 0$ 
the requirement changes to the possible one
$\{ \m{zip}~f~(\m{cons}~x~\mi{xs}), \m{cons}~y~\mi{ys} \} \gtB^\mul \{ f~x, y \}$.
Applying reification to the left-hand side is a key to handle the
variable application $f~x$.
\end{example}

Let us take a look at limitations of AMPOs, too.

\begin{example}
\label{ex:limitation}
The one-rule ATRS
\(
\m{f}~(\m{f}~x) \to \m{f}~(\m{g}~(\m{f}~x))
\)
is beyond the realm of our order, because the opposite orientation
$\m{f}~(\m{g}~(\m{f}~x)) \gtM \m{f}~(\m{f}~x)$ follows from
$\m{g}~x \gtM x$.  
Similarly, the one-rule ATRS $\m{a} \to \m{f}~\m{f}$
cannot be handled by our order: If $\alpha(\m{f}) = 0$ then
$\m{f}~\m{f} \gtM \m{a}$ holds.  
Otherwise,
we do not have $\m{a} \gtM \m{f}$.  
\end{example}

These limitations are essentially due to the inductive definition of the base
order.  The next example shows a limitation on precedences for AMPO.

\begin{example}[continued from \cref{ex:map}]
\label{ex:maximality-matters}
If we add the rule $\m{mapid} \to \m{map}~\m{id}$ to the ATRS, the
termination is no longer shown by AMPO:  To orient it, the
precedence must satisfy $\m{mapid} \succ \APP$,
which contradicts 
the maximality of $\APP$.
\end{example}

We have evaluated \cref{cor:termination} on 202 ATRSs in the
Termination Problem Database (TPDB).  Suitable arity assignments and
precedences are searched by the SMT solver Z3~\cite{MB08};
see~\cite{CGST12,ZHM09} for the encoding techniques.\footnote{%
The full experimental data, including those of LPO, EPO, and uncurrying
(see \cref{sec:conclusion}),
are available at \url{https://www.jaist.ac.jp/~hirokawa/26ijcar/}.}
Out of the 202 ATRSs, 143 are left head variable free, and 24 are
shown terminating by AMPO, while 2 are shown by MPO.
Note that the termination tools AProVE~\cite{GAB+17} and NaTT~\cite{YKS14}
prove termination of 165 and 148 ATRSs, respectively. Both subsume
the results of AMPOs.

\section{Conclusion}
\label{sec:conclusion}

Although there are
many variants~\cite{BJR15,JR07,KvR08,LB98,Toy04} of the recursive/multiset
path order for higher-order rewriting, here we compare ours with
ones having close relationship.
The lambda-free recursive path order by Blanchette et al.~\cite{BWW17} 
has been developed for automated theorem
proving.  It satisfies almost all properties for being a reduction order,
but it is not closed under contexts.  Recently, Bentkamp introduced the
embedding path order (EPO)~\cite{B21}.  As a family of MPO, this is the
first reduction order.  It can only orient \emph{non-duplicating} rules
$\ell \to r$ in which no variable occurs more in $r$ than in $\ell$.  Thus,
termination of higher-order recursive functions like map and
filter (\cref{ex:map,ex:filter}) cannot be shown by the order.  It should
be noted that the lambda-free RPO and the EPO satisfies totality on ground
terms, which guarantees refutational completeness of ordered completion~\cite{BDP89}
as well as of superposition calculus~\cite{BG94}.  Unfortunately, our order does not enjoy
the property: if $\alpha(\m{f}) = 1$, $\alpha(\m{a}) = 0$ and
$\m{f} \succ \m{a}$ then $\m{f}\,\m{a}$ and $\m{f}$ are incomparable with
$\gtB$ and also with $\gtM$.
Jouannaud and Rubio~\cite{JR07} developed the higher-order recursive
path order on \emph{typed} terms.  Due to differences in term structure it
is hard to relate their work to ours, but it is worth noting that
Buchholz's method~\cite{B95} is an untyped version of their well-foundedness proof.

Uncurrying~\cite{HMZ13,ST11} transforms left head variable free ATRSs into
functional TRSs in a termination reflecting manner.  
For example, the termination of the map ATRS in \cref{ex:map} follows from that of
the functional TRS consisting of the uncurried rules
\begin{alignat*}{4}
\m{append}_2(\m{cons}_2(x, \mi{xs}), \mi{ys}) &\to \m{cons}_2(x, \m{append}_2(\mi{xs}, \mi{ys}))
&
\mathllap{\m{append}_2(\m{nil}, \mi{ys})} &\to \mi{ys}
\\
\m{map}_2(f,\m{cons}_2(x,\mi{xs}))  & \to \m{cons}_2({\circ}(f,x),\m{map}_2(f,\mi{xs}))
&
\m{map}_2(f,\m{nil}) &\to \m{nil}
\\
\m{map}_2(f, \m{append}_2(\mi{xs}, \mi{ys})) & \to \m{append}_2(\m{map}_2(f, \mi{xs}), \m{map}_2(f, \mi{ys}))
&
\m{id}_1(x) & \to x
\\
\m{map}_2(\m{id},\mi{xs}) &\to \m{id}_1(\mi{xs})
&
\m{map}_1(\m{id}) &\to \m{id}
\end{alignat*}
and the uncurrying rules:
\begin{align*}
{\circ}(\m{append},\mi{xs}) &\to \m{append}_1(\mi{xs})
&
{\circ}(\m{append}_1(\mi{xs}), \mi{ys}) &\to \m{append}_2(\mi{xs},\mi{ys})
\\
{\circ}(\m{map},f) &\to \m{map}_1(f)
&
{\circ}(\m{map}_1(f),\mi{xs}) &\to \m{map}_2(f,\mi{xs})
\\
{\circ}(\m{cons},x) &\to \m{cons}_1(x)
&
{\circ}(\m{cons}_1(x), \mi{xs}) &\to \m{cons}_2(x,\mi{xs})
\\
{\circ}(\m{id},x) &\to \m{id}_1(x)
\end{align*}
The termination of the above TRS is shown by the multiset path order with
the precedence
\(
{\circ} \sim \m{map}
\succ \m{append}
\succ \m{id}
\succ \m{cons}
\succ \m{nil}
\) and $f \sim f_i$ for all function symbol $f$ and its indexed copies $f_i$.
Here ${\circ}$ being a maximum is used for orienting the uncurrying
rules.  If we identify $\circ$ with $\APP$ and each indexed copy
$f_i$ with the original symbol $f$, the precedence is the same as the one
in \cref{ex:map}.  
We anticipate that the applicative multiset path order can be simulated by a combination of
generalized uncurrying~\cite{ST11} and the original multiset path order.

The lexicographic path order (LPO)~\cite{KL80} is a famous variant of MPO
which compares arguments lexicographically.  Of course, such a version
could be considered in a similar fashion.  However,
if we adopt \emph{left-to-right}
lexicographic comparison, \cref{lem:reification} (a key lemma) fails to
hold.  We expect
that \emph{right-to-left} lexicographic comparison is a viable alternative.

Last but not least, 
because of the construction used in \cref{def:gtM} (by considering
all grounding substitutions), it is obscure if the relation $s \gtM t$ is
decidable for a fixed pair of precedence and arity assignment.  Developing
a decision procedure or finding a generalization of \cref{thm:hvf} would
enable us to use completion~\cite{KB70} and rewriting induction~\cite{R90}
more effectively.

\begin{credits}
\subsubsection{\ackname}
We are grateful to the reviewers for their valuable suggestions.
This research was supported by JSPS KAKENHI Grant Numbers JP22K11900 and
JP25KJ1363.
\subsubsection{\discintname}
The authors have no competing interests to declare that are
relevant to the content of this article.
\end{credits}

\bibliographystyle{splncs04}
\bibliography{references}

\appendix
\section{Omitted Proofs}
\label{sec:appendix}

Recall the statement of \cref{lem:sensibility}:
If $s \gtB t$ then $\Var(s) \supseteq  \Var(t)$.

\begin{proof}[of \cref{lem:sensibility}]
Suppose $s = f\,\aseq[m]{s} \gtB t$.  We show $\Var(s) \supseteq  \Var(t)$ by
induction on the derivation of $s \gtB t$.
\begin{enumerate}[1.]
\item
If $s \gtB t$ is derived by \AMPO{1} then $s_i = t$ or $s_i \gtB t$ for
some $i$.  In the former case $\Var(s_i) = \Var(t)$ holds.  In the latter
case the induction hypothesis yields $\Var(s_i) \supseteq \Var(t)$.  So
in either case $\Var(s) \supseteq \Var(s_i) \supseteq \Var(t)$ follows.
\item
If $s \gtB t$ is derived by \AMPO{2} then $t$ is of the form $g\,\aseq{t}$
with $g \in \Sigma$ and $s \gtB t_j$ for all $j$.  By the induction
hypothesis $\Var(s) \supseteq \Var(t_j)$ for all $j$.  Thus, the claim
holds.
\item
If $s \gtB t$ is derived by \AMPO{3} then $t$ is of the form $g\,\aseq{t}$
with $g \in \Sigma$ and $\{\seq[m]{s}\} \gtB^\mul \{\seq{n}\}$.  For every
$j$ there exists an $i$ such that $s_i = t_j$ or $s_i \gtB t_j$ holds.  
For the former case $\Var(s_i) = \Var(t_j)$ holds.  For the latter case
$\Var(s_i) \supseteq \Var(t_j)$ holds by the induction hypothesis.  
So in either case $\Var(s) \supseteq \Var(t_j)$ follows.  Thus, the claim
follows. 
\item
If $s \gtB t$ is derived by \AMPO{4} then $s \gtB \reify{t}$.  By the
induction hypothesis $\Var(s) \supseteq \Var(\reify{t})$ holds.  As
$\Var(\reify{t}) = \Var(t)$, the claim follows.
\item
If $s \gtB t$ is derived by \AMPO{5} then $\reify{s} \gtB t$.  By the
induction hypothesis $\Var(\reify{s}) \supseteq \Var(t)$ holds.  As
$\Var(s) = \Var(\reify{s})$, the claim follows.
\qedhere
\end{enumerate}
\end{proof}

Recall the statement of \cref{lem:head-preservation}: Suppose $\sigma$
is head-preserving for $s$ and $t$.  If $s \gtB t$ then 
$s\sigma \gtB t\sigma$.
 
\begin{proof}[of \cref{lem:head-preservation}]
Suppose $s = f\,\aseq[m]{s} \gtB t$.
We show $s\sigma \gtB t\sigma$ by induction on the derivation of 
$s \gtB t$.  Note that $\sigma$ is head-preserving for all subterms in $s$
and $t$.
\begin{enumerate}[1.]
\item
If $s \gtB t$ is derived by \AMPO{1} then $s_i = t$ or $s_i \gtB t$ holds
for some $i$.  In the former case $s_i\sigma = t\sigma$ holds trivially. In
the latter case $s_i\sigma \gtB t\sigma$ follows from the induction
hypothesis.  Therefore, $s\sigma \gtB t\sigma$ is derived by \AMPO{1}.
\item
If $s \gtB t$ is derived by \AMPO{2} then $t$ is a full term 
$g\,\aseq{t}$ with $g \in \Sigma$ and $f \succ g$, and
$s \gtB t_j$ for all $j$.  By the induction hypothesis
$s\sigma \gtB t_j\sigma$ for all $j$.  Therefore, $s\sigma \gtB t\sigma$ is
derived by \AMPO{2}.
\item
If $s \gtB t$ is derived by \AMPO{3} then the proof is analogous to the
last case.
\item
If $s \gtB t$ is derived by \AMPO{4} then $t$ is reifiable and not full, and
$s \gtB \reify{t}$.  As $\sigma$ is head-preserving for $t$, the instance
$t\sigma$ is reifiable and not full.  So the claim is immediate from
the induction hypothesis.
\item
If $s \gtB t$ is derived by \AMPO{5} then the proof is analogous to the
last case.
\qedhere
\end{enumerate}
\end{proof}

In the remaining part of the appendix we show that orientability of a
finite LATRS with AMPO is decidable.  
We use the notion of applicative arity~\cite{HMZ13}.
Let $\cR$ be a finite ATRS.  The \emph{applicative arity} $\aaR(f)$ of a
constant $f \in \Sigma$ is defined as follows~\cite{HMZ13}:
\[
\aaR(f) = \max\{ n \mid f\,\aseq{t} \in S_\cR \}
\]
Here $S_\cR$ consists of all subterms occurring in $\cR$
and $\max\varnothing$ is assumed to be $0$.
We write $T_\cR$ for the set
\(
S_\cR \cup \{ \reify{t} \mid \text{$t \in S_\cR$ and $t$ is reifiable} \}
\).
Given a rule $\ell \to r \in \cR$, the derivation of $\ell \gtB r$ uses
terms in $T_\cR$ only.  We define the arity assignment $\beta$ as
$\beta(f) = \min\{\alpha(f), \aaR(f)+1\}$.  

\begin{lemma}
\label{lem:fullness}
A term in $T_\cR$ is full with respect to $\alpha$ if and only if it is
full with respect to $\beta$.
\end{lemma}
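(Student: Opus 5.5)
We argue directly from the definitions. A term $t \in T_\cR$ can be written as $f\,\aseq{t}$ with $f \in \Sigma \cup \cV$. If $f \in \cV$, then $t$ is full with respect to neither assignment, since fullness requires a constant head. So assume $f \in \Sigma$. Fullness with respect to $\alpha$ means $\alpha(f) = n$, and fullness with respect to $\beta$ means $\min\{\alpha(f), \aaR(f)+1\} = n$. The key auxiliary fact we need is the bound $n \leqslant \aaR(f)$ for every $t = f\,\aseq{t} \in T_\cR$ with $f \in \Sigma$.

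We establish this bound first. If $t \in S_\cR$, it is immediate from the definition of $\aaR$ as a maximum over $S_\cR$. Otherwise $t = \reify{u}$ for some reifiable $u \in S_\cR$, so $t = \APP\,(g\,\aseq[k-1]{u})\,u_k$ and the head of $t$ is $\APP$ with $n = 2$. In that case we must show $\aaR(\APP) \geqslant 2$. This is the delicate point. If $\APP$ never occurs in $\cR$, then $\aaR(\APP) = 0$, so $\beta(\APP) = \min\{2, 1\} = 1$. Then $\reify{u}$ is full with respect to $\alpha$ but not with respect to $\beta$. Moreover $\beta$ would fail to be an arity assignment, since arity assignments require the value $2$ at $\APP$.

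We therefore adopt the natural repair: treat $\APP$ separately and set $\beta(\APP) = 2 = \alpha(\APP)$. This is forced anyway by the requirement that $\beta$ be an arity assignment, and we read the definition of $\beta$ as holding for $f \neq \APP$. Alternatively, one may note that $\aaR(\APP)$ ought to count the reified terms, that is, take the maximum over $T_\cR$ rather than $S_\cR$. Either convention works. We state the convention explicitly and flag it as the main obstacle, since under the literal formula the reified case breaks.

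With that settled, the proof is two short cases. If $f = \APP$, then $\alpha(\APP) = \beta(\APP)$, so the two notions of fullness coincide trivially. If $f \neq \APP$, then $t \in S_\cR$ (reified terms are headed by $\APP$), so $n \leqslant \aaR(f) < \aaR(f)+1$. Now suppose $\alpha(f) = n$; then $\alpha(f) \leqslant \aaR(f) < \aaR(f)+1$, so $\beta(f) = \alpha(f) = n$. Conversely, suppose $\beta(f) = n$. Since $n \leqslant \aaR(f)$, we have $\beta(f) \neq \aaR(f)+1$, so the minimum is attained by $\alpha(f)$, giving $\alpha(f) = n$. This proves the equivalence.
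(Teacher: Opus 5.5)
Your proof is correct and follows the paper's argument: both directions are the same case analysis on $\min\{\alpha(f), \aaR(f)+1\}$, driven by the bound $n \leqslant \aaR(f)$. The paper writes ``$\beta(f) \neq \aaR(f)$'', which is a typo for $\beta(f) \neq \aaR(f)+1$; you have it right.

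The obstacle you flag is real, and the paper passes over it. Its proof applies $n \leqslant \aaR(f)$ to every term of $T_\cR$. But take a reification $\APP\,(g\,\aseq[k-1]{u})\,u_k$ in the case where $\APP$ is never applied to an argument in $\cR$, which is the normal situation since $\APP$ is fresh. Then $\aaR(\APP) = 0$, so the literal definition gives $\beta(\APP) = 1$. The reified term is full for $\alpha$ but not for $\beta$, so the lemma as stated fails, and $\beta$ is not even an arity assignment. Your convention $\beta(\APP) = 2$ is exactly the needed repair.
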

\begin{proof}
Consider a term $t$ in $T_\cR$ of the form $f\,\aseq{t}$ with $f \in \Sigma$.
If $t$ is full with respect to $\alpha$ then 
$\alpha(f) = n \leqslant \aaR(f)$.  So we obtain 
$\beta(f) = \alpha(f)$.  Thus, $t$ is full with respect to $\beta$.
Conversely, if $t$ is full with respect to $\beta$ then
$\beta(f) = n < n + 1 \leqslant \aaR(f) + 1$, from which
$\beta(f) \neq \aaR(f)$ follows.  Thus, $n = \beta(f) = \alpha(f)$, and
therefore $t$ is full with respect to $\alpha$.
\end{proof}

Therefore, in any derivation of $\ell \gtB r$ for $\ell \to r \in \cR$
fullness with respect to $\alpha$ is indistinguishable from fullness with
respect to $\beta$.  Since the only role of an arity assignment in
\cref{def:ampo} is to determine fullness of terms, 
$\ell \gtB r$ with $\alpha$ carries over to $\ell \gtB r$ with $\beta$.

\begin{lemma}
\label{lem:orientability}
Let $\alpha$ be an arity assignment and let $s, t \in T_\cR$.
\begin{enumerate}[(1)]
\item
If $s \gtB t$ with respect to $\alpha$ then $s \gtB t$ with respect to
$\beta$.
\item
If $s \gtE t$ with respect to $\alpha$ then $s \gtE t$ with respect to
$\beta$.
\item
If $s >_\ampo t$ with respect to $\alpha$ then $s >_\ampo t$ with respect to
$\beta$.
\end{enumerate}
Hence, if $\cR \subseteq {>_\ampo}$ with respect to $\alpha$ then
$\cR \subseteq {>_\ampo}$ with respect to $\beta$.
\end{lemma}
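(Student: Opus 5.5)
The plan is to prove (1) by induction on the derivation of $s \gtB t$, and then lift it to (2) and (3) following the definitions of $\gtE$ and $\gtM$. Throughout I take $\beta(\APP) = 2$. This is forced, since arity assignments must map $\APP$ to $2$; $\APP$ is fresh, so $\aaR(\APP)$ carries no information.

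\textbf{Closure of $T_\cR$.} The first step is to check that $T_\cR$ is closed under the operations used in \cref{def:ampo}.
\begin{itemize}
\item If $f\,\aseq{t} \in T_\cR$, then every $t_i \in T_\cR$.
\item If $t \in T_\cR$ is reifiable and not full, then $\reify{t} \in T_\cR$.
\end{itemize}
For a reification $\reify{t} = \APP\,(f\,\aseq[n-1]{t})\,t_n$ with $t \in S_\cR$, the arguments $f\,\aseq[n-1]{t}$ and $t_n$ lie in $S_\cR$, and $\reify{t}$ is full, so it is never reified again. Hence every term occurring in a derivation of $s \gtB t$ with $s,t \in T_\cR$ lies in $T_\cR$.

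\textbf{Proof of (1).} With closure in hand, (1) is a routine induction over the five cases. The precedence is unchanged. By \cref{lem:fullness}, every side condition ``full'', ``reifiable and not full'' and every reification coincides for $\alpha$ and $\beta$. So each rule instance under $\alpha$ is also a rule instance under $\beta$, and the induction hypothesis applies to the premises, which are again in $T_\cR$.

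\textbf{Proof of (2).} I would argue by induction on $\|s\|_\eta$ computed for $\beta$. The inequality $s \gtB t$ transfers by (1). Now suppose $s = f\,\aseq{s}$ is $\beta$-partial. Since $s \in T_\cR$, we have $n \leqslant \aaR(f)$, and $\beta(f) \leqslant \alpha(f)$, so $s$ is also $\alpha$-partial. The $\alpha$-definition of $\gtE$ then gives $s\,x \gtE t\,x$ under $\alpha$, and it remains to transfer this to $\beta$.

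This is the main obstacle: $s\,x$ and $t\,x$ need not lie in $T_\cR$. My first attempt is to enlarge $T_\cR$ to a set $T'$, defined as follows.
\begin{itemize}
\item Start from $T_\cR$.
\item Close under adding fresh variables $u \mapsto u\,x$ whenever $u$ is $\beta$-partial.
\item Close again under arguments and reification.
\end{itemize}
I would then reprove \cref{lem:fullness} for $T'$. The invariant I would aim for is that every term $g\,\aseq[k]{u}$ occurring in $T'$ satisfies $k \leqslant \aaR(g)$, or $g = \APP$ and $k \leqslant 2$. For such terms, $\alpha$-fullness and $\beta$-fullness coincide: if $k = \alpha(g)$ then $k \leqslant \aaR(g)$ gives $\beta(g) = \alpha(g)$, and if $k = \beta(g) \leqslant \aaR(g) < \aaR(g)+1$ then again $\beta(g) = \alpha(g)$.

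The delicate point is a term $t\,x$ whose head already has $\aaR(g)$ arguments, so that $t\,x$ has $\aaR(g)+1$ arguments. Here I would argue directly that such a term is non-full under $\beta$ precisely in the situations where the $\alpha$-derivation treats it as non-full or reifies it. In that case rules \AMPO{4} and \AMPO{5} apply identically. If this fails, I would fall back on rebuilding the derivation of $s\,x \gtB t\,x$ under $\beta$ via \cref{lem:right}, using $\beta$-maximality of $\APP$.

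\textbf{Proof of (3).} Finally, (3) should follow from (2) applied pointwise to ground instances $s\sigma \gtE t\sigma$. Again the instances are outside $T_\cR$. I expect to handle this with the same enlarged-set argument, since $\sigma$ only contributes subterms at variable positions, and heads from $\Sigma$ there do not interact with $\aaR$ of symbols in $\cR$ except through fullness checks that the $\alpha$-derivation already performs. The closing sentence about $\cR \subseteq {>_\ampo}$ then follows immediately from (3).
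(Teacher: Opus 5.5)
Your proof of (1) is essentially the paper's: closure of $T_\cR$ under taking arguments and under reification of non-full terms, then induction on the derivation using \cref{lem:fullness}. Setting $\beta(\APP) = 2$ by hand is a genuine correction, since the formula gives $\min\{2,\aaR(\APP)+1\} = 1$ for the fresh symbol $\APP$. You are also right that (2) and (3) are not ``immediate'', as the paper asserts, because $s\,x$, $t\,x$, $s\sigma$ and $t\sigma$ leave $T_\cR$.

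However, your repair fails exactly at the point you flag, and the offending term is the left-hand side $s\,x$ itself. Suppose $s = f\,s_1\cdots s_n$ is $\beta$-partial with $n = \aaR(f)$ and $\alpha(f) \geqslant n+2$. Then $\beta(f) = n+1$, so $s\,x$ is $\beta$-full but not $\alpha$-full. This has three consequences:
\begin{itemize}
\item $s\,x$ violates your invariant for $T'$;
\item the $\alpha$-derivation of $s\,x \gtB t\,x$ may use \AMPO{5}, which is unavailable under $\beta$;
\item the fallback via \cref{lem:right} is circular, because for partial $s$ that lemma assumes $s \gtE t$ with respect to $\beta$, which is what you are proving.
\end{itemize}
Conceptually, the $\eta$-condition for $\alpha$ tests $s$ with $\alpha(f)$ arguments. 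For $\beta$ it stops at $\beta(f)$ arguments, where the left side becomes full and loses \AMPO{5}.

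No repair exists: (2), (3) and the concluding sentence are false, so the paper's proof has the same gap. Take
\begin{itemize}
\item $\cR = \{\m{f}\,\m{g} \to \m{g},\ \m{g}\,\m{a}\,\m{a} \to \m{a}\}$,
\item the precedence $\APP \succ \m{f} \succ \m{g} \succ \m{a}$,
\item $\alpha(\m{f}) = 3$, $\alpha(\m{g}) = 2$, $\alpha(\m{a}) = 0$.
\end{itemize}
Since $\aaR(\m{f}) = 1$ and $\aaR(\m{g}) = 2$, we get $\beta(\m{f}) = \beta(\m{g}) = 2$ and $\beta(\m{a}) = 0$.

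With respect to $\alpha$:
\begin{itemize}
\item $\m{g}\,\m{a}\,\m{a} \gtE \m{a}$ and $\m{f}\,\m{g} \gtB \m{g}$ hold by \AMPO{1};
\item $\m{f}\,\m{g}\,x \gtB \m{g}\,x$ holds by \AMPO{5}, \AMPO{4} and \AMPO{3}, using $\{\m{f}\,\m{g}, x\} \gtB^\mul \{\m{g}, x\}$;
\item $\m{f}\,\m{g}\,x\,y \gtB \m{g}\,x\,y$ holds by \AMPO{2}, since both sides are full and $\m{f} \succ \m{g}$.
\end{itemize}
Hence $\m{f}\,\m{g} \gtE \m{g}$. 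As both rules are ground, $\cR \subseteq {\gtM}$ with respect to $\alpha$.

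With respect to $\beta$, the term $\m{f}\,\m{g}$ is still partial, so $\m{f}\,\m{g} \gtE \m{g}$ requires $\m{f}\,\m{g}\,x \gtB \m{g}\,x$. Every rule is blocked:
\begin{itemize}
\item the left side is now full, which excludes \AMPO{5};
\item the right side is not full, which excludes \AMPO{2} and \AMPO{3};
\item \AMPO{1} fails because $\m{g}$ and $x$ are $\gtB$-minimal;
\item \AMPO{4} would need $\m{f}\,\m{g}\,x \gtB \APP\,\m{g}\,x$, which is impossible: $\APP \succ \m{f}$ rules out \AMPO{2} and \AMPO{3}, and \AMPO{1} fails as before.
\end{itemize}
So $\m{f}\,\m{g} \gtE \m{g}$ fails with respect to $\beta$. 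Because $\gtM$ and $\gtE$ coincide on ground terms, (3) and the concluding claim fail too.

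Only (1) survives. The bound $\alpha(f) \leqslant \aaR(f)+1$ used for \cref{thm:decidability} needs a different justification. This $\cR$ is still orientable with $\alpha(\m{f}) = 1$, so the example refutes only this route, not the theorem.
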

\begin{proof}
The first claim is shown by straightforward induction on the derivation of
$s \gtB t$ with using \cref{lem:fullness}.  The second and third claims are
immediate from the first and the second, respectively.
\end{proof}

\cref{lem:orientability} states that, when finding an arity assignment
$\alpha$ and a precedence $>$ that satisfy $\cR \subseteq {>_\ampo}$, one
may assume $\alpha(f) \leqslant \aaR(f) + 1$ for all $f \in \Sigma$.
So existence of a suitable arity assignment and precedence can be found
by enumeration.

\begin{theorem}
\label{thm:decidability}
Given a finite LATRS $\cR$, whether $\cR \subseteq {>_\ampo}$ holds for
some arity assignment and precedence is decidable.
\end{theorem}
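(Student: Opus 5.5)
The plan is to reduce the question to a finite search, each step of which is decidable. There are three steps: bound the arity assignments, bound the precedences, and show that for fixed $\alpha$ and $\succsim$ the test $\ell \gtM r$ is decidable when $\ell$ is head variable free.

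\emph{Finitely many candidates.} By \cref{lem:orientability} we may assume $\alpha(f) \leqslant \aaR(f)+1$ for every $f \in \Sigma$, and $\alpha(\APP)=2$ is fixed. Only the symbols occurring in $\cR$, together with $\APP$, are relevant for derivations over $T_\cR$. So $\alpha$ on other constants and the precedence outside this finite set $F$ can be fixed arbitrarily, for instance $\alpha = 0$ there and those symbols placed below everything. Hence there are only finitely many pairs to try: an arity assignment bounded by $\aaR+1$ on $F$, and a quasi-order on $F$ having $\APP$ as maximum. Every quasi-order on a finite set is well-founded, so \cref{thm:ampo} applies to each candidate.

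\emph{Decidability of $\gtB$ and $\gtE$.} For fixed $\alpha$ and $\succsim$, I decide $s \gtB t$ by recursion following \cref{def:ampo}. Every recursive call either moves to a proper subterm or replaces $s$ or $t$ by its reification. The measure $(\|s\|_1+\|t\|_1, \|s\|_2+\|t\|_2)$ used in \cref{lem:transitivity}, compared lexicographically, decreases along these calls by \cref{lem:measures}. The multiset comparison involves finitely many pairs. Hence $\gtB$ is decidable. For $\gtE$, \cref{def:eta} adds at most $\|s\|_\eta$ further $\gtB$-checks with fresh variables, so $\gtE$ is decidable as well.

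\emph{Main obstacle: showing that $\ell \gtM r$ iff $\ell \gtE r$ for head variable free $\ell$.} The direction from right to left is \cref{thm:hvf}. For the converse I would take the grounding substitution $\sigma$ mapping each variable $x$ to a fresh constant $c_x$, and choose $\alpha(c_x) > \aaR(c_x)+1$ and $c_x$ incomparable with all other symbols. Then $c_x$ and every term $c_x\,\aseq{t}$ that occurs in the derivation are partial, never full. Such terms behave exactly like terms headed by variables:
\begin{itemize}
\item on the left-hand side only \AMPO{1} and \AMPO{5} apply to them;
\item on the right-hand side they can never be the target of \AMPO{2} or \AMPO{3};
\item $c_x = c_y$ iff $x = y$.
\end{itemize}
By induction on the derivation of $\ell\sigma \gtB r\sigma$, replacing each $c_x$ back by $x$ turns it into a derivation of $\ell \gtB r$. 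The $\eta$-part transfers as well, since $\ell$ is head variable free and therefore $\ell\sigma$ is partial iff $\ell$ is.

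The delicate point is the use of fresh constants. $\gtM$ quantifies over ground terms of the given signature, so I must argue that enlarging $\Sigma$ by the constants $c_x$, with the above $\alpha$ and precedence, preserves $\ell \gtM r$. A grounding substitution for the enlarged signature may map variables to terms containing $c_x$, which the original $\gtM$ does not constrain. So the proof must show either that the original signature already supplies suitable unused partial constants, or that $\ell \gtM r$ persists under such conservative extensions. I expect this step to require the most care. Once it is settled, enumerating the finitely many candidates and checking $\ell \gtE r$ for each rule decides the property.
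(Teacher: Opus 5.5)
\textbf{Verdict: genuine gap, but only for the literal $\gtM$ reading; the paper does not fill it either.}

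Your steps~1 and~2 are exactly the paper's argument. The paper bounds arities via \cref{lem:orientability} ($\alpha(f) \leqslant \aaR(f)+1$), notes that only finitely many arity assignments and precedences remain, and concludes by enumeration. Your lexicographic measure $(\|s\|_1+\|t\|_1,\|s\|_2+\|t\|_2)$ for the recursive test of $\gtB$, plus at most $\|s\|_\eta$ extra checks for $\gtE$, even supplies the termination argument the paper leaves implicit.

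The gap is step~3, which you leave open and on which the literal $\gtM$ reading rests entirely. The fresh-constant route does not reduce it to anything easier.
\begin{itemize}
\item $\gtM$ quantifies over the grounding substitutions of the given signature, and adjoining the $c_x$ creates strictly more of them.
\item Your renaming argument shows $\ell\sigma \gtE r\sigma \Rightarrow \ell \gtE r$ for $\sigma = \{x \mapsto c_x\}$. Conversely, $\ell \gtE r$ yields $\ell \gtM r$ over any signature by \cref{thm:hvf}.
\item So ``$\ell \gtM r$ persists under the extension'' is essentially equivalent to the implication $\ell \gtM r \Rightarrow \ell \gtE r$ you set out to prove.
\end{itemize}
Nor can you in general find the substitutes inside the original signature. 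The ground terms that behave exactly like a variable in derivations are bare constants that stay non-full under every argument list the variable receives in $r$ (e.g.\ $\APP$ for a variable applied to at most one argument). This fails in three ways:
\begin{itemize}
\item the signature of $\cR$ need not contain one such constant per variable;
\item your arity bound from step~1 restricts them further;
\item identifying distinct variables can make a $\gtB^=$ test or a multiset comparison succeed, so an identified instance is no witness.
\end{itemize}
There are also two minor slips. First, $c_x$ cannot be incomparable with everything, since $\APP$ must remain maximum. Second, the condition on $\alpha(c_x)$ must refer to the number of arguments $x$ receives, because $\aaR(c_x) = 0$.

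The paper does not close this gap either. Its proof is just the arity bound plus enumeration. It never relates $\gtM$ back to $\gtE$, and it gives no test for $\ell \gtM r$ at a fixed candidate other than the syntactic one for $\gtE$. Indeed, \cref{sec:conclusion} says explicitly that it is unclear whether $s \gtM t$ is decidable for a fixed arity assignment and precedence.

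\cref{sec:evaluation} cites the theorem for the decidability of the criterion of \cref{cor:termination}, i.e.\ of $\cR \subseteq {\gtE}$ for some $\alpha$ and some well-founded $\succsim$ with $\APP$ maximum. Under that reading, your steps~1 and~2 already form a complete proof matching the paper's, and step~3 should simply be dropped. Under the literal reading with $\gtM$, neither your proposal nor the paper proves the theorem. What is missing is precisely your step~3, or some other decision procedure for $\ell \gtM r$ with $\ell$ head variable free.
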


\end{document}